\newtheorem{theorem}{Theorem}[section]
\newtheorem{lemma}[theorem]{Lemma}
\begin{document}
	
	\title{Exact holographic tensor networks for the  Motzkin spin chain}
	
	\author{Rafael N. Alexander}
	\affiliation{Centre for Quantum Computation and Communication Technology,
		School of Science, RMIT University, Melbourne, VIC 3000, Australia}
	\affiliation{Center for Quantum Information and Control, University of New Mexico, Albuquerque, NM 87131, USA}
	\email{rafael.alexander3@rmit.edu.au }
	\author{Glen Evenbly} 
	\affiliation{School of Physics, Georgia Institute of Technology, Atlanta, GA 30332, USA}
	\email{gevenbly3@gatech.edu}
	\author{Israel Klich} 
	\affiliation{Department of Physics, University of Virginia, Charlottesville, Virginia 22903, USA}
	\email{klich@virginia.edu}
	\begin{abstract}
		The study of low-dimensional quantum systems has proven to be a particularly fertile field for discovering novel types of quantum matter. 
		When studied numerically, low-energy states of low-dimensional quantum systems are often approximated via a tensor-network description. The tensor network's utility in studying short range correlated states in 1D have been thoroughly investigated, with numerous examples where the treatment is essentially exact. Yet, despite the large number of works investigating these networks and their relations to physical models, examples of exact correspondence between the ground state of a quantum critical system and an appropriate scale-invariant tensor network have eluded us so far. Here we show that the features of the quantum-critical Motzkin model can be faithfully captured by an analytic tensor network that exactly represents the ground state of the physical Hamiltonian. In particular, our network offers a two-dimensional representation of this state by a correspondence between walks and a type of tiling of a square lattice. 
	\end{abstract}
	
	\maketitle
	
	\section{Introduction}
	One of the hallmarks of critical behavior is the divergence of correlations, and the emergence of scale invariance; the low-energy behavior of the system seems to be of a similar nature on small and large scales. Beyond the physical beauty of such states, their treatment has helped develop many important ideas and tools, such as, conformal field theory (CFT) and the renormalization group (RG) \cite{wilson1983renormalization,fisher1998renormalization}. 
	
	At temperatures approaching absolute zero, criticality is of a quantum nature and is a focal point of interest for low-temperature many-body physics.  At such quantum critical points, long-range quantum fluctuations and correlations allow the system to sustain emergent large-scale quantum behavior (see e.g. Ref.~\cite{sachdev2011quantum}). 
	Such phenomena have been observed in many experiments, including in magnetic systems where quantum critical behavior may develop when a magnetic transition is driven by changes in chemical doping~\cite{lohneysen1994non}, pressure~\cite{julian1996normal}, external fields~\cite{grigera2001magnetic} and other system parameters.

	A major current challenge is therefore the detailed description of scale-invariant ground states of quantum systems. %
	This is a particularly difficult problem since such systems are typically highly entangled, and harder to study than gapped generic ground states with short-range correlations.
	
	A recent promising approach to simulating many-body states is via so-called tensor networks. Tensor network notation~\cite{Orus2014,bridgeman2017hand} offers a convenient graphical representation of the entanglement structure of many-body quantum states.  A particularly simple class of 1D tensor networks, particularly useful for describing spin-chains with finite range correlations, are known as {\it matrix product states} (MPS). These were introduced in Ref.~\cite{fannes1992finitely}, and are the variational class of states used in White's {\it density matrix renormalization group} (DMRG) numerical procedure ~\cite{White1992}, arguably the most successful tool for numerical investigation of quantum phases in 1D. Another class of tensor network states, known as the {\it multi-scale entanglement renormalization ansatz} (MERA) \cite{Vidal2008MERA}, was proposed by Vidal~\cite{Vidal2007ER} to be specially tailored for describing quantum critical points.

	While there are a number of examples of physical spin systems for which an exact matrix product state representation is known for the ground state, most notably the celebrated AKLT state of Affleck, Lieb, Kennedy and Tasaki \cite{affleck1987rigorous}, the situation for critical systems is quite different. To our knowledge, until now, there has been no exact example of a scale-invariant tensor network that represents the ground state of a simple local Hamiltonian. Such networks are typically computed numerically. The only prior case of an analytical MERA we are aware of describes a sequence of approximate descriptions for free fermions, shown in Ref.~\cite{Evenbly2016, haegeman2018rigorous}.

	Here we provide the first {\it exact} analytic hierarchical tensor network for describing a critical state: the ground state of the Motzkin spin chain. Our results suggest that scale-invariant tensor networks may be useful beyond dealing with CFTs. Indeed, it is well known that the Motzkin Hamiltonian does not have the energy level scaling associated with a CFT~\cite{Movassagh2016Power}.  Despite this, the Motzkin spin chain bears many hallmarks of quantum critical systems. For instance, the Hamiltonian's gap closes polynomially quickly~\cite{bravyi2012criticality}, the ground state has logarithmically growing entanglement entropy~\cite{bravyi2012criticality}, and the ground state has a scale invariant description. This last point is one of our main results.

	The Motzkin model, as well as the closely related Fredkin model, grew out of the study of ``frustration-free'' Hamiltonians.  These are an important class of Hamiltonians, where the ground state is also a simultaneous ground state for the local interactions comprising the Hamiltonian. Frustration-free Hamiltonians have been recently used for constructing novel quantum states of matter and also as representations for a variety of quantum optimization problems. Examples of frustration-free Hamiltonians include Kitaev's toric code \cite{kitaev2003fault}, the AKLT Hamiltonian, and the Rokhsar-Kiveson model for a quantum dimer gas \cite{rokhsar1988superconductivity}. Moreover,  any MPS state is the ground state of a frustration-free Hamiltonian~\cite{fannes1992finitely, perez2007peps, schuch2010peps, schuch2011classifying}.

	The Motzkin Hamiltonian, a spin-1 Hamiltonian introduced by Bravyi et al. in Ref.~\cite{bravyi2012criticality}, represents an important new class of Hamiltonians that are both frustration-free and critical. The model admits a straightforward geometric interpretation in terms of random walks called ``Motzkin walks'' and its ground state is exactly solvable.  Moreover it is the starting point for several generalizations that uncovered rich new possibilities for how the entanglement in a ground state can scale. In particular, Movassagh and Shor \cite{Movassagh2016Power} showed how a higher spin (``colored'') version of the model can feature much enhanced entanglement (going from typical logarithmic behavior in critical spin chains to a power law). Motivated by this model, Zhang, Ahmadain and Klich~\cite{Zhang2017novel} found a parametric deformation of the model that yielded a continuous family of frustration-free Hamiltonians featuring a new quantum phase transition interpolating between an area-law phase and a ``rainbow'' phase with volume scaling of half-chain entanglement entropy. A spin 1/2 version of the model has been proposed based on so-called Fredkin gates by Salberger and Korepin~\cite{salberger2017entangled,dell2016violation}, and its deformation was presented in Ref.~\cite{salberger2017deformed,zhang2017entropy,udagawa2017finite}. An interesting recent variation of this class of models can be found in, e.g. Ref.~\cite{sugino2018area} using symmetric inverse semigroups.
	
	Here we present two types of tensor network that faithfully capture both the geometric properties and the entanglement structure of the colorless models: The binary height network and the height renormalization network. As described below, each path is mapped onto a tiling of a grid that defines the tensor network. Thus, our networks have the rules that govern the random walker baked into their building blocks. In particular, the second network we propose has a MERA-like structure~\cite{Bal2016} and defines a natural renormalization process of Motzkin walk configurations. 
	\section{Motzkin spin chains}
	As stated above, the spin-1 Motzkin model was introduced as an example of a critical spin chain described by a frustration-free Hamiltonian with nearest-neighbor interactions. These have been generalized to include a deformation parameter $t$, where $t=1$ corresponds to a critical point.
	
	The Motzkin Hamiltonian has a unique zero-energy frustration-free ground state for any value of $t>0$.
	For our present purpose, we need the exact form of the ground state as we describe below. (The interested reader may refer to the detailed description of the Hamiltonian in the Appendix~\ref{app:A} and in the references). 
	This ground state is a superposition of walks called ``Motzkin Walks''. A Motzkin walk $w$ is a walk on the $\mathbb{Z}^{2}$ lattice using the line segments $\{ \diagup,  \text{\textbf{---}}, \diagdown \}$ that start at $(0,0)$, go to $(2n,0)$, and never go below the $x$ axis. Each walk represents a spin configuration via identifying the Motzkin line segments with the  local spin $\{ \ket{1}, \ket{0}, \ket{-1}\}$ states, respectively.  The ground state can be written as \cite{Zhang2017novel}:
	
	\begin{equation} |\text{GS}\rangle = \frac{1}{\mathcal{N}}\sum_{\substack{w \in \{ \text{Motzkin}\ \text{walks}\}}} t^{\mathcal{A}(w)} |w\rangle \label{cgs}. \end{equation} 
	Here $\mathcal{A}(w)$ denotes the area below the Motzkin walk $w$, and $\mathcal{N}$ is a normalization factor. A similar type of ground state occurs in the Fredkin model, which is a half-integer spin model with essentially the same structure; it only lacks  the ``flat'' move.
	
	The half-chain entanglement entropy, which is a measure of the degree of quantum correlations in the system, is maximal for the $t=1$ case, where it grows logarithmically in $n$. For $t<1$ and $t>1$, the ground state satisfies an area law ~\cite{zhang2017entropy}: the entanglement entropy is bounded by a constant independent of system size. The deformed Motzkin and Fredkin walks can naturally be viewed as constrained trajectories of a random walker in the presence of drift; the ``x'' axis plays the role of time; and the $t$ parameter measures the strength and direction of the drift. More details can be found in references \cite{salberger2017entangled,salberger2017deformed,Zhang2017novel,zhang2017entropy}.
	
	Though we will focus on the unweighted $t=1$ version of the ground state, we will also describe the details of the $t>0$ cases (see Eqs.~(\ref{eq:binheightweight}) and (\ref{eq:rnweight})). 
	

	The Motzkin model can also be studied with periodic boundary conditions~\cite{Movassagh2016Power}. The ground space of the periodic model has a $4n+1$ degeneracy, and the ground states $\ket{\Psi_{k}}$ consist of equal superpositions over all spin-$z$ configurations that have a total magnetization $k$, where $-2n\leq k \leq 2n$~\cite{Movassagh2016Power}. Recently, such states were shown to act as approximate quantum error correcting codes~\cite{Brandao2018}. 
	
	\section{Binary height tensor network}
	The Motzkin model can be described as a ``height model'': the height of each walk is encoded within a field value at each point.  This is a natural starting point for constructing a field theory description of the colorless Motzkin state, as was done in Ref.~\cite{Chen2017MotzkinLifshitz}. 

	If we describe the state using a binary representation of the heights, we see that, since at site $x$, the walk could have reached at most height $x$, we need at most $\log_{2}(2x)$ bits to encode the height. The height encoded at site $x+1$ results from the adding (or subtracting) the spin value at site $x+1$. 
	Thus, our first step is to generate a tensor network that implements these additions. We find it convenient to encode both the walk and the binary addition using a set of tiles, as explained below.
	
	\subsection{Walks as tiles}
	Consider the square tiles $A_{1},\dots,A_{6}$ represented below:%
	\begin{align}
		\includegraphics[width=0.8\linewidth]{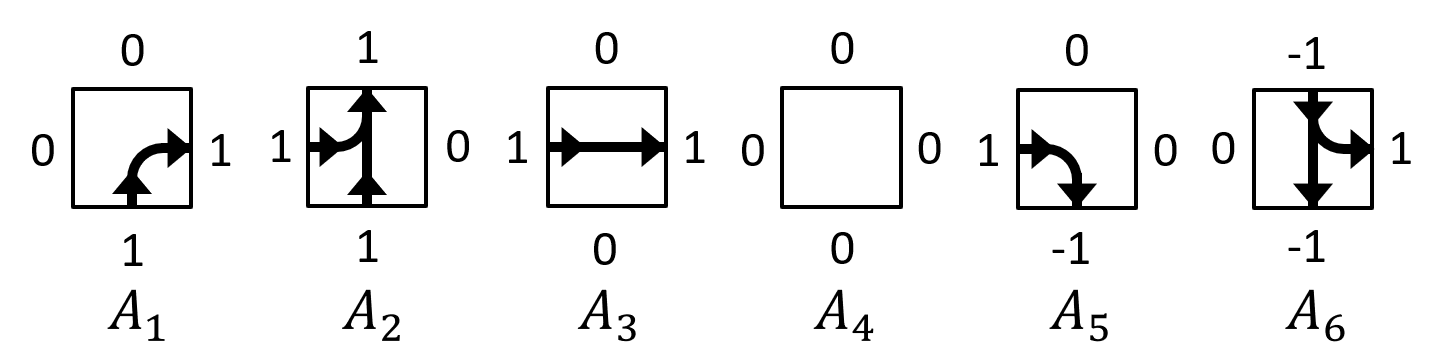}\label{eq:tiles1}
	\end{align}
	Top and bottom edges are labeled by $1$, $-1$, or $0$ if that edge touches a ``$\uparrow$''-line, a ``$\downarrow$''-line, or neither, respectively. Left and right edges are labeled by $1$ or $0$ depending on if that edge touches a ``$\rightarrow$''-line or not.  
	
	Given a tiling of a square grid, we say that a tiling is {\it valid} if all edges match. To represent a $2n$-step Motzkin walk by a valid tiling, we consider a square-grided ``step pyramid'', as shown in Fig.~\ref{fig:t1Exact}. The steps vary in length because only a logarithmically growing number of bits ($\lfloor \text{log}_{2} 2x\rfloor$) is required to store the height at the $x^{\text{th}}$ column, for $1\leq x\leq n$. Also, the network is symmetric about the halfway point. We impose boundary conditions on the exterior north, east, and west edges such that they are all equal to zero. The values on the south boundary correspond to the local spin values of the walk. As shown in Fig.~\ref{fig:t1Exact}, the height of the walk is encoded in the binary strings between columns of tiles. This defines an isomorphism between Motzkin walks and valid tilings.
	
	\begin{figure}
		\begin{center}
			\includegraphics[width=\linewidth]{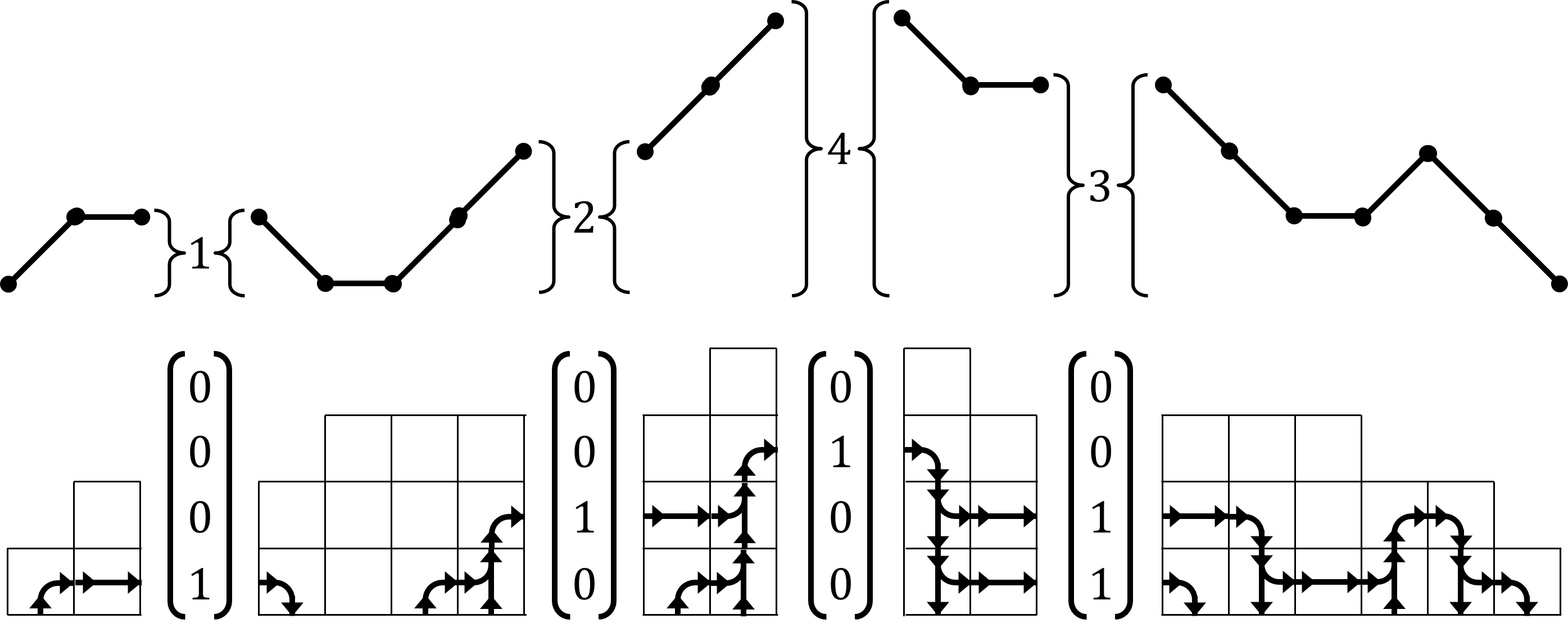}
		\end{center}
		\caption{Tile representation (bottom) for a Motzkin walk (top). The tiles from Eq.~(\ref{eq:tiles1}) are placed on a square-gridded step pyramid. The height of the walk is encoded in binary between columns of tiles, where ``1'' and ``0'' values are associated with the presence or absence of a horizontal arrow. For example, the middle point has height $4$, which, in the network is encoded by the sequence $0100$, representing $4$ in binary.  Vertical arrows act as ``carry'' bits in the binary addition process.  Each valid tiling can be described pictorially. A value of 1 at the base of the pyramid starts an arrowed line that travels through the bulk. These arrowed paths travel horizontally to the right, but 
			to move vertically, two arrowed paths must fuse, and  similarly, an arrowed path bifurcates when moving downwards.} \label{fig:t1Exact} 
	\end{figure}

	\subsection{Tiles as tensors}
	Let us define a tensor network that sums over all configurations along the bottom edge of a square-lattice step pyramid that yield a valid tiling.
	First, we identify each tile $A_{j}$ with a tensor. We define the rank-one four-index tensor
	\begin{align}
		\delta_{k_{1}k_{2}k_{3}k_{4} }(w, x, y, z) = \delta_{k_{1}, w} \delta_{k_{2}, x}\delta_{k_{3}, y}\delta_{k_{4}, z}. \label{eq:delta1}
	\end{align}
	where $\delta_{j, k}$ is the Kronecker delta function and $w, x, y, z \in\mathbb{Z}$. Next, identify each tile in Eq. \eqref{eq:tiles1} with such a tensor where the values $w, x, y, z$ correspond to the north, east, south, and west boundary values, respectively. Explicitly, 
	\begin{align*}
		A_{1;\vec{k}} &= \delta_{\vec{k}}(0, 1, 1, 0) ~~;~~ 
		A_{2;\vec{k}} = \delta_{\vec{k}}(1, 0, 1, 1) \\  
		A_{3;\vec{k}} &= \delta_{\vec{k}}(0, 1, 0, 1) ~~;~~  
		A_{4;\vec{k}} = \delta_{\vec{k}}(0, 0, 0, 0) \\  
		A_{5;\vec{k}} &= \delta_{\vec{k}}(0, 0, -1, 1) ~~;~~
		A_{6;\vec{k}} = \delta_{\vec{k}}(-1, 1, -1, 0) 
	\end{align*}
	where we compress the four index notation $k_{1},k_{2},k_{3},k_{4},$ into a 4-tuple $\vec{k}$. Now we can define
	\begin{align}
		\begin{tikzpicture}
			\begin{scope} 
				\node {$\displaystyle B_{\vec{k}}= \sum^{6}_{j=1} A_{j;\vec{k}} =$};
			\end{scope}
			\begin{scope}[xshift = 2.5cm]
				\node {\includegraphics[width=0.25\linewidth]{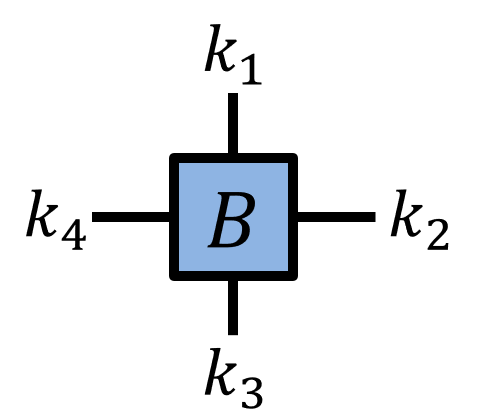}};
			\end{scope}
		\end{tikzpicture} \label{eq:TNBdef}
	\end{align}
	which is the primary building block of our first tensor network representation of the Motzkin ground state, which we call the {\it binary-height tensor network}. It is shown in Fig.~{\ref{fig:TN2defs}(a)}. Recall that the north, east, and west facing outer edges of each tiling must be projected onto the value $0$. This boundary condition is set by indices on these edges with 
	\begin{align}
		\begin{tikzpicture}
			\begin{scope} 
				\node {$\displaystyle \delta_{k_{1}, 0} =$};
			\end{scope}
			\begin{scope}[xshift = 1cm]
				\node {\includegraphics[width=0.05\linewidth]{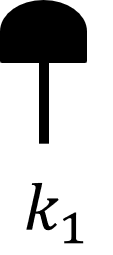}};
			\end{scope}
		\end{tikzpicture} \label{eq:bdtensdef}
	\end{align}

	

	
	\begin{figure}
		\includegraphics[width=\linewidth]{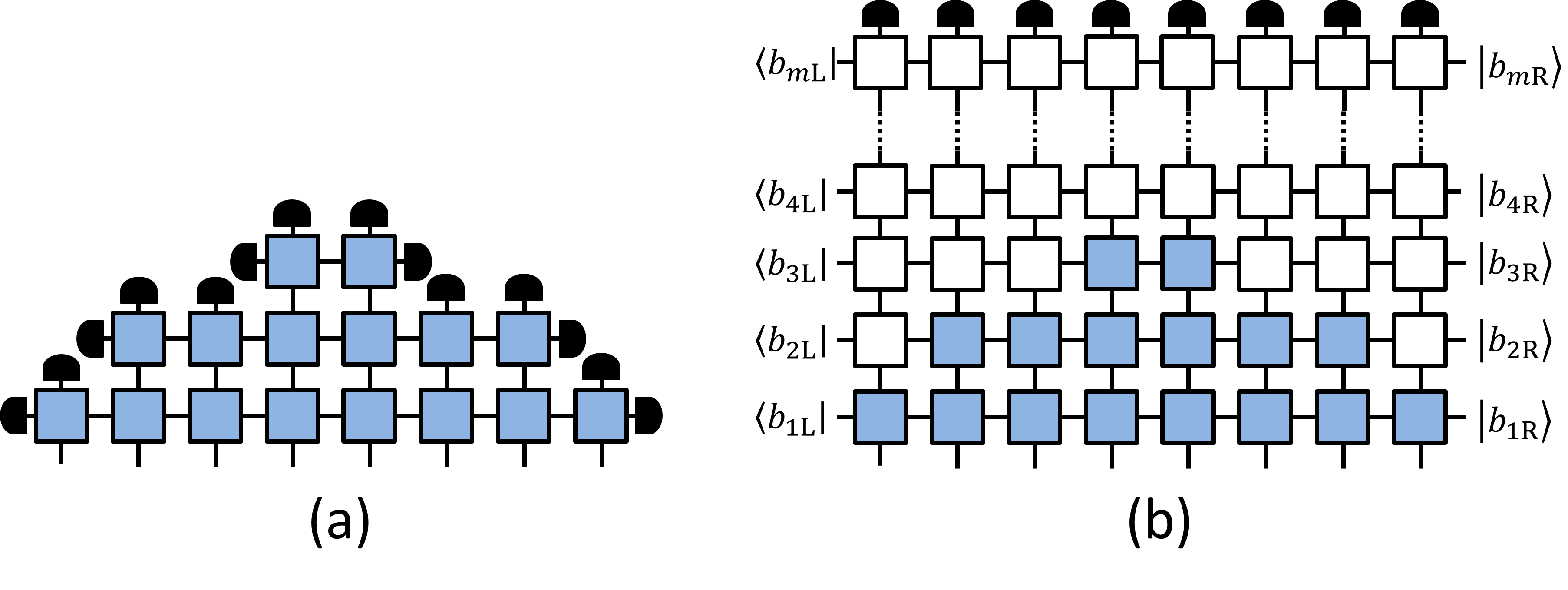}
		\caption{$\mathbf{(a)}$ 8-spin example of the binary height tensor network.  Physical indices for this network are located at the bottom. This network requires $O(n \log n)$ square tensors. $\mathbf{(b)}$ Generalized binary-height tensor network for summing over all walks that start at height $p$ and end at height $q$. The left and right boundaries are contractions with various $\ket{b_{i}}$, where $b_{i}\in \{0, 1\}$, and  $b_{m\text{L}}\dots b_{1\text{L}}$ and $b_{m\text{R}}\dots b_{1\text{R}}$ are the binary expansions of $p$ and $q$, respectively. In the case where $p=q=0$, this is equivalent to the original network shown in (a). The original $B$ tensors (colored blue) have been padded from above with identical $B$ tensors (colored white).} \label{fig:TN2defs}
	\end{figure}
	
	Contracting this network gives a value of $1$ if the tiling is valid and $0$ otherwise. Therefore, it represents an equal-weight superposition of Motzkin walks, which is exactly the $t=1$ Motzkin ground state. 
	
	The height of this network is $\lfloor \log_{2} n \rfloor$, therefore the bond dimension between the two halves of the network grows as $O(n)$. The total number of $B$ tensors required scales as $O(n \log{n})$~\footnote{More specifically, for  $n = 2^{y}-1$, where $y$ is a positive integer, exactly $2(n+1)\log_{2}(n+1) - 2n$ many $B$ tensors are required.}. An immediate upper bound on the entanglement entropy between the two sides, as estimated by the number of cuts needed, is simply $\log{n}$.
	
	The generalization to the spin-1/2 case (known as the Fredkin model) and the case of area-weighted walks is given in Eqs.~(\ref{eq:fredbintens}) and (\ref{eq:binheightweight}), respectively.
	Below we show how the binary height network can be straightforwardly generalized to describe the spin $1/2$ Fredkin model, as well as the  Motzkin model with periodic or height-varied boundary conditions.


	\subsection{Boundary conditions}\label{sec:TN2bound}
	
	First, we consider modifying the shape of the network. If we extend the width-$2n$ square-gridded step pyramid of the binary height-network to an $m \times 2n$ square-gridded rectangle with $m\geq \lfloor\log_{2} 2n \rfloor$, then there is no increase in the number of valid tilings. In fact, tilings of the rectangle are merely embedded tilings of the pyramid that have been  padded from above with blank $A_{4}$ tiles. 
	
	Next, consider generalizing the left and right boundary conditions of the binary-height network (extended to a height $m$ rectangle) from $\ket{0}^{\otimes \lfloor \log_{2} 2n \rfloor}$ to a product of $\ket{0}$ and $\ket{1}$ states
	\begin{align}
		\ket{\vec{b}_{\text{L(R)}}} = \bigotimes^{m}_{k=1} \ket{b_{k\text{L(R)}}}. \label{eq:bdvec}
	\end{align}
	where $b_{k\text{L(R)}}$ is the $k^{\text{th}}$ bit in the binary expansion of the integer $p$  ($q$) between 0 and $2^{m}-1$. 
	
	Fig.~{\ref{fig:TN2defs}(b)} shows a tensor network that incorporates these changes. If $2^{m}-1\geq \text{max}(p,q) +n$, this network represents the equally weighted sum over all walk configurations with non-negative heights starting at height $p$ and finishing at height $q$.

	If the left and right height boundary conditions are increased so that $\text{min}(p,q)\geq n$, then all length $2n$ walks with net height $q-p$ will be generated by the network. This is because no walk is long enough to reach a negative height value. This observation can be used to find a representation of $\ket{\Psi_{k}}$. Choosing left and right boundary vectors to represent any $p$ and $q$ such that $q-p=k$ and $\text{min}(p,q)\geq n$ will generate $\ket{\Psi_{k}}$. One example is to set $\text{max}(p, q) = 2 n$ and $\text{min}(p, q) = 2 n - \lvert k \rvert$.


	\section{Zipping up: an exact network RG transformation}\label{sec:RNTN} 
	The binary-height tensor network discussed above offers a compact and intuitive representation of the Motzkin ground state. Nevertheless, the $O(n \log{n})$ number of $B$ tensors required is suboptimal. Here we show that the ${t=1}$ Motzkin and Fredkin ground states, and their generalizations with periodic boundary conditions, can be represented exactly using a tensor network that only requires $O(n)$ many tensors.
	
	We offer two equivalent constructions. In the main text we present a tile-based approach consistent with the construction of the binary-height tensor network presented above. In Figs.~\ref{fig:TensorDef}-\ref{fig:Compare} we provide an independent construction that is more closely related to existing tensor-network methods. In particular, it leverages the framework of U(1)-invariant tensors, originally described in Ref.~\cite{singh2011tensor}.
	
	The key idea behind the tile-based construction is a network renormalization group step associated to the ``zipper lemma'':
	\begin{align}
		\includegraphics[width=0.4 \linewidth]{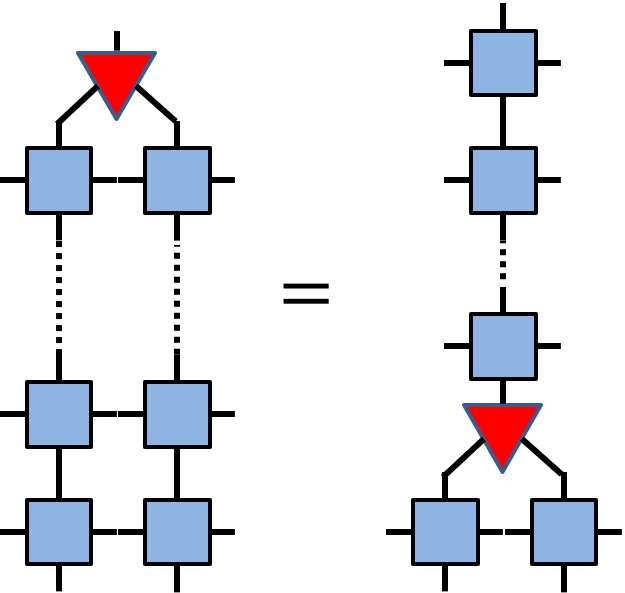}\label{eq:zipperlemmaT}
	\end{align}
	where the triangular tensor will be defined below, and the proof is given in the appendix. Sequential application of the zipper lemma leads to a new tensor network where each layer discards the highest frequency components of the layer below. Thus, it naturally represents a renormalization process for Motzkin walks, and resembles a MERA.
	
	First we will treat the periodic case, which is more closely related to the binary-height tensor network and involves a simpler tile set. After this, we will generalize the construction to the original Motzkin model. 
	
	\subsection{The height renormalization network with periodic boundary conditions}
	
	We will require the following three index tensor:
	\begin{align}
		\begin{tikzpicture}
			\begin{scope} 
				\node {$\displaystyle T = \sum_{i, j, k=-1}^{1} t_{i j k} \ket{i}\ket{j}\bra{k} =$};
			\end{scope}
			\begin{scope}[xshift = 3cm]
				\node {\includegraphics[width=0.15\linewidth]{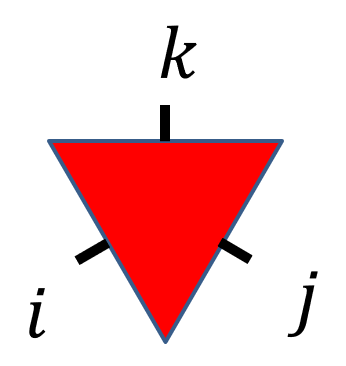}};
			\end{scope}
		\end{tikzpicture} \label{eq:triangletensor}
	\end{align}
	where,
	\begin{align}
		t_{ijk} \coloneqq \delta_{i+j, k}
	\end{align}
	and all indices $i, j$, and $k$ are restricted to values $\{-1, 0, 1\}$. Consequently, the values of $i$ and $j$ cannot be both 1 or both -1. 
	
	Now we introduce the basic unit of renormalization. Consider the tensor network shown below, which maps two spin values $s_1$ and $s_2$, to a single spin value $s_3$
	\begin{align}
		\includegraphics[width=0.25\linewidth]{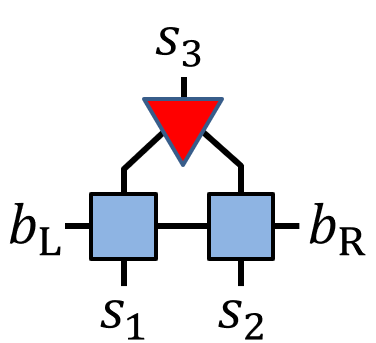}\label{eq:renormunit}
	\end{align}  
	where
	\begin{align}
		s_{3} =  \begin{cases}
			\lfloor (s_{1}+s_{2})/2\rfloor,& \text{if } b_{\text{L}} = 0\\
			\lceil (s_{1}+s_{2})/2\rceil,              & \text{if } b_{\text{R}} = 1
		\end{cases}
	\end{align}
	and
	\begin{align}
		b_{\text{R}}=b_{\text{L}} + s_{1} +s_{2} \quad (\text{mod}\, 2).
	\end{align}
	This map can be represented pictorially, as shown in Fig.~{\ref{fig:TNRNtable2}(a)}. 
	
	\begin{figure}
		\begin{center}
			\includegraphics[width=\linewidth]{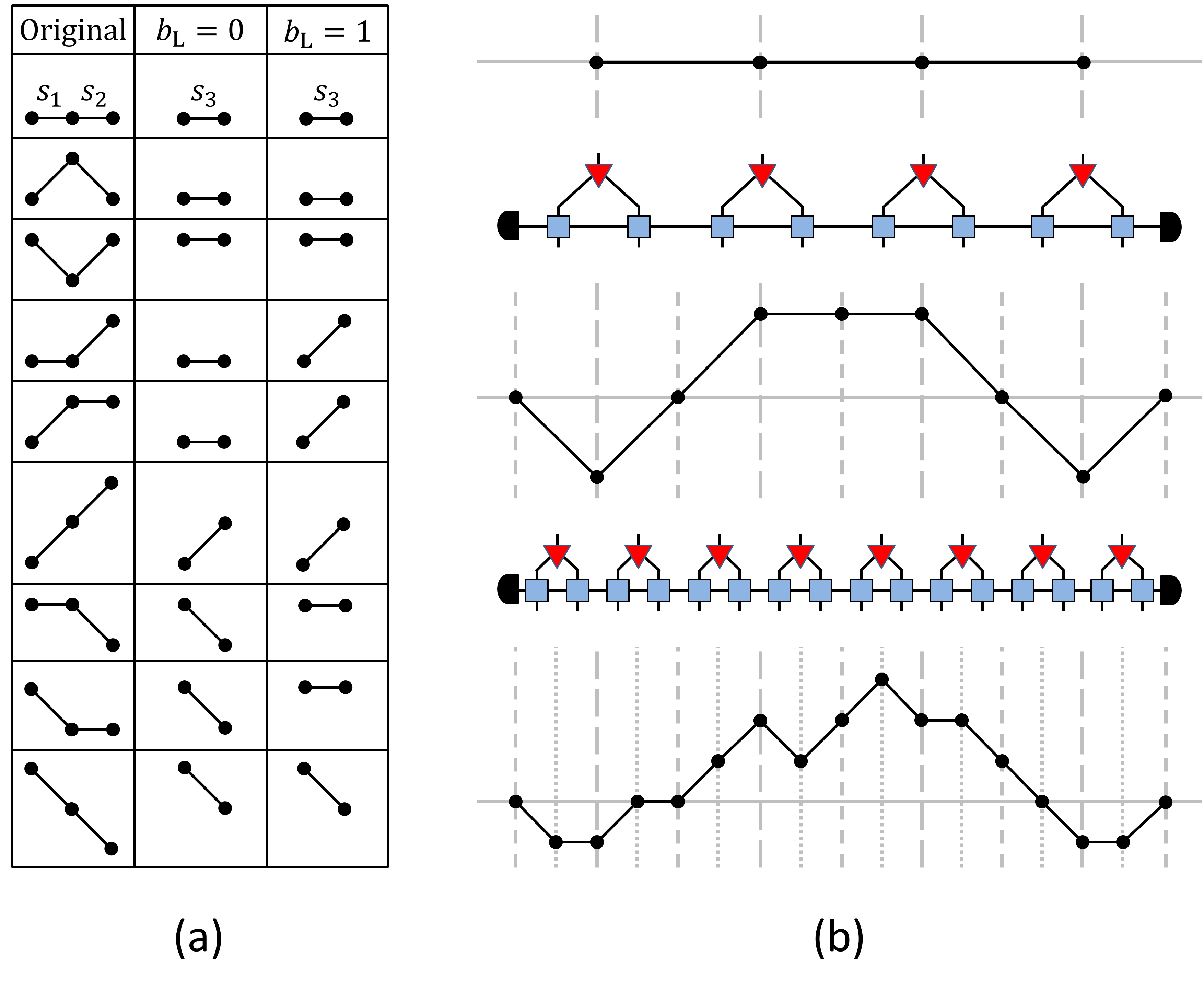}
		\end{center}
		\caption{$\mathbf{(a)}$ Table showing      the map applied by the tensor network in Eq.~(\ref{eq:renormunit}) between the top and bottom indices. The value of $s_{3}$ depends on whether the left index value $b_{\text{L}}$ is $0$ or $1$. Equivalently, it depends on if the segment corresponding to $s_{1}$ starts at an even or odd height, respectively. Note that the values on the right two columns do not depend on the height of the middle node in the first column. $\mathbf{(b)}$ Example of walk renormalization. The height achieved at each dashed line on the top walk is equal to half the height (rounding down) at the same point in the walk below. As was pointed out in (a), the height values at the thickly dashed lines for the higher walks do not depend on the heights at the more finely dashed lines for walks below. This is the information that is discarded by the coarse graining process.}\label{fig:TNRNtable2}
	\end{figure}
	
	Next, consider the action of a single layer of these 3-tensor renormalization networks, as shown in Fig.~{\ref{fig:TNRNtable2}(b)}. As a linear map acting on a walk configuration on the $2n$ bottom indices, it outputs a ``coarse grained'' version of that walk on the $n$ top indices. More specifically, the height of the walk between sites $l$ and $l+1$ on the top indices is half the height (rounding down) of that between sites $2l$ and $2l+1$ at the bottom indices.

	If this course-graining procedure were repeated $\lfloor \log_{2} (2n) \rfloor$ times, it could be represented by the network shown in Fig.~{\ref{fig:rntndefs}(a)}. It requires $O(n)$ many square and triangular tensors~\footnote{More specifically, for $2n=2^{y}$ for some integer $y>0$, exactly $4n (1-2^{-2n})$ many $B$ tensors are required, and  $2n (1-2^{-n})$ many $T$ tensors are required. }. Note that we have left bit values of the left and right boundary vectors $\ket{\vec{b}_{\text{L}}}$ and $\ket{\vec{b}_{\text{R}}}$ unspecified. These encode two integers $p$ and $q$, respectively. 
	
	Contracting this tensor network with a walk configuration gives the value 1 if the walk reaches a net height of $p-q$, and gives zero otherwise. Therefore, by choosing $\text{max}(p,q)=2n$ and $\text{min}(p,q) = 2n-\vert k \vert$, this tensor network represents the periodic ground state $\ket{\Psi_{k}}$. 
	
	One way to verify this claim would be to compute all valid tilings of this network. The triangle tensors can be redefined in tile notation as follows  
	\begin{align}
		T = \sum_{l=1}^{7} D_{l} \label{eq:tdef}
	\end{align}
	where the triangular tiles $D_{l}$ are defined in Fig.~{\ref{fig:rntndefs}(b)}, where 
	\begin{align}
		\delta_{\vec{k}}(x, y, z) = \delta_{k_{1}, x} \delta_{k_{2}, y}\delta_{k_{3}, z}. \label{eq:delta2}
	\end{align}
	We use the zipper lemma to show in detail how the height renormalization tensor network is equivalent to the binary-height tensor network representation of the periodic model in Fig.~\ref{fig:TNequivalence}.
	
	\begin{figure}
		\begin{center}
			\includegraphics[width=\linewidth]{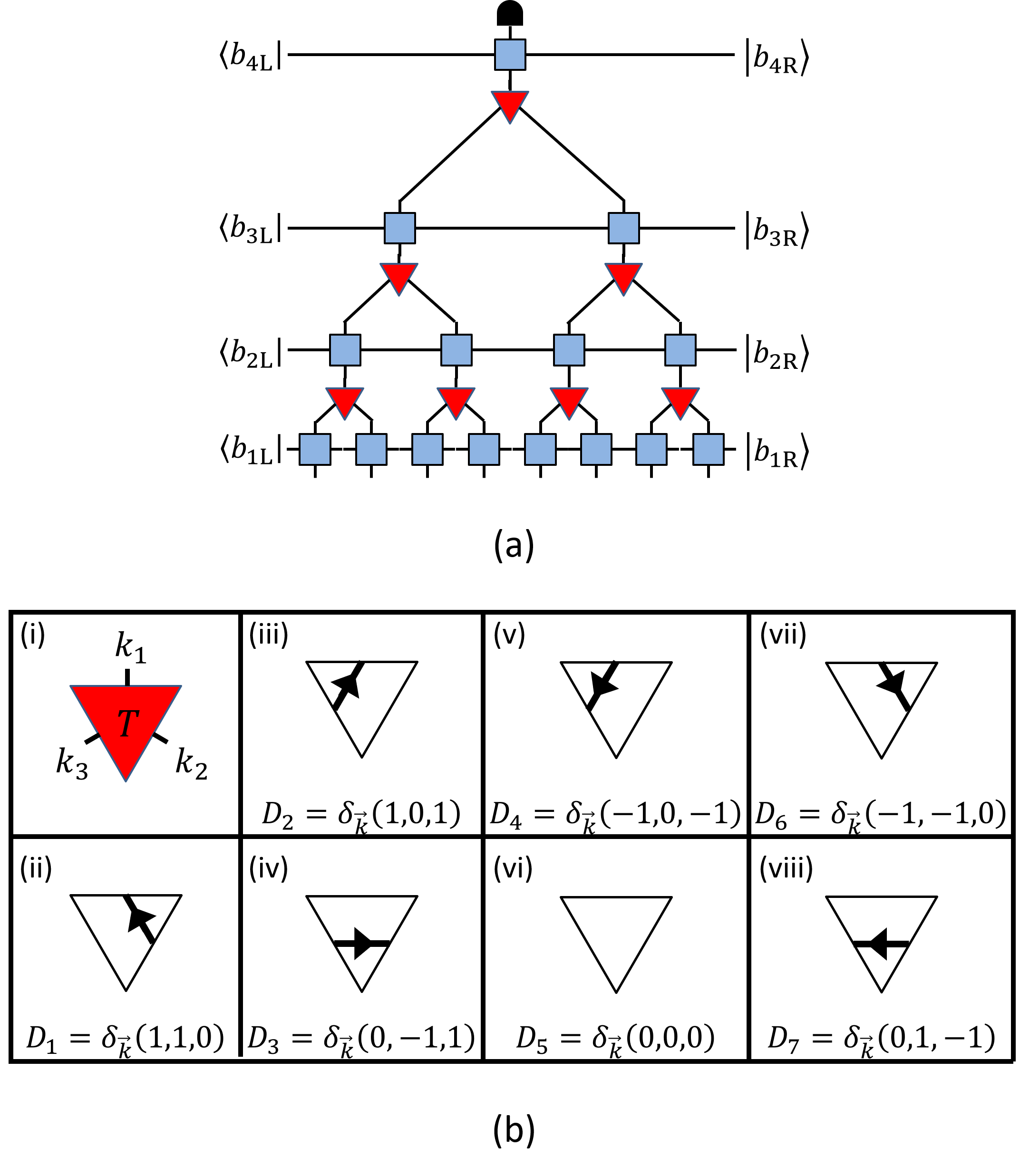}
		\end{center}
		\caption{$\mathbf{(a)}$ The height renormalization tensor network. The square tensors $B$ are identical to those defined in Fig.~{(\ref{fig:TN2defs})(b)}. $\mathbf{(b)}$ (i) The triangular tensors $T$ are defined in Eq.~(\ref{eq:tdef}). This definition involves the triangular tiles (ii-viii), which are themselves defined using $\delta_{\vec{k}}(x, y, z)$ (see Eq.~(\ref{eq:delta2})).} \label{fig:rntndefs}
	\end{figure}
	
	Next, we show that by modifying the $B$ and $T$ tensors, the same type of tensor network can represent the original Motzkin model (with open boundary conditions).

	\subsection{Original Motzkin model}
	The main modification we make to the network is that we increase the bond dimension of all indices from three to four. In addition to the three spin values $\{-1, 0, 1\}$, each index can also be assigned a non-physical value labeled $\omega$. Each tensor will still be represented as the sum of tiles. New tiles will indicate the $\omega$-value of an edge using a dotted line. In order to still represent a spin-1 chain, physical indices of the network (those that appear at the bottom) must take only spin-1 values. To ensure this, we define the projector
	\begin{align}
		\Pi = 1-\ket{\omega}\!\!\bra{\omega}, \label{eq:Piproj}
	\end{align}
	which will be appended to all physical indices of the network. 
	
	We define the new tensor network in Fig.~{\ref{fig:rntndefs2}(a)}. We modify the square and triangular building blocks of the tensor as follows. 
	
	The square tensors in Fig.~{\ref{fig:rntndefs2}(a)} are denoted $C$ and are defined as 
	\begin{align}
		C= B +  A_{7} + A_{8} \label{eq:TN2Bdef}
	\end{align}
	where $A_{7}$ and $A_{8}$ are defined in Fig.~{\ref{fig:rntndefs2}(b)}, and $B$ was defined in Eq.~(\ref{eq:TNBdef}). 
	The triangle tensors in Fig.~{\ref{fig:rntndefs2}(a)} are denoted $S$, and are defined as  
	\begin{align}
		S= \sum^{12}_{l=1} E_{l} \label{eq:Sdef}
	\end{align}
	where $E_{1},\dots E_{12}$ are defined in Fig.~{\ref{fig:rntndefs2}(c)}.

	\begin{figure*}
		\begin{center}
			\includegraphics[width=\linewidth]{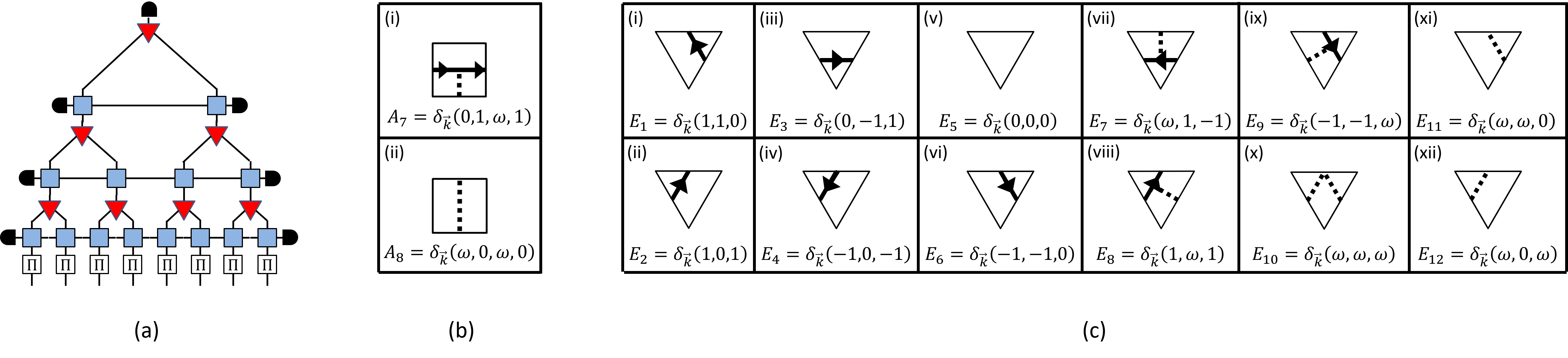}
		\end{center}
		\caption{$\mathbf{(a)}$ Tensor network representation of the spin-1 Motzkin model. The definitions of the $\Pi$ tensors, blue square $C$ tensors, and red triangular $S$ tensors are given in Eqs.~(\ref{eq:Piproj}), (\ref{eq:TN2Bdef}), and (\ref{eq:Sdef}). $\mathbf{(b)}$ Two square tiles used in the $S$ tensor. The value of $\omega$ is represented by a dotted line.  $\mathbf{(c)}$  Twelve triangular tiles used in $S$. }\label{fig:rntndefs2}
	\end{figure*}
	
	Similar to the height renormalization tensor network for the periodic case, we verify in Fig.~\ref{fig:TNequivalence2} that this network represents the Motzkin ground state by proving its equivalence to the binary-height tensor network shown in Fig.~\ref{fig:TN2defs}. 
	
	We also note that the height renormalization tensor network can be generalized to the spin-1/2 Fredkin model in the same way as the binary-height tensor network. The generalization to area-weighted walks is given in Fig.~\ref{eq:rnweight}. 
	
	\section{Discussion}
	
	We presented exact hierarchical tensor networks for representing the ground state of the spin-1 Motzkin spin-chain. These networks generate the sum over tile configurations that are one-to-one with valid Motzkin walks. 
	The networks utilize the characterization of the states as height models, the ability to encode the height efficiently in a binary way, and that binary addition can be understood as a local operation between digits in an addition algorithm. 
	The tile-based two-dimensional representation of each walk provides a bulk description of the spin chain: each valid bulk ``picture'' corresponds to a particular boundary state in the ground state superposition. It is interesting to note that bulk boundary correspondence in terms of tiling has played a useful theme in tackling other hard problems, most recently in the context of spin glasses \cite{klich2014glassiness}. 
	
	At this point it is interesting to consider the relation between our scale-invariant network, the renormalization group (RG), and the MERA. The height renormalization network we describe is clearly similar in structure to a 1D binary (scale-invariant) MERA~\cite{evenbly2009algorithms}, but where the disentangling has been realized through the action of a matrix product operator (MPO) as opposed to the tensor product of local unitary gates utilized in a standard MERA. Indeed, the alternative construction of a network for the Motzkin chain described in Figs.~\ref{fig:TensorDef}-\ref{fig:Compare} proceeds from the perspective of an RG transformation, building the network layer-by-layer in a manner comparable to that constructions of MERA using entanglement renormalization~\cite{Vidal2007ER}, thus further illuminating the connection between these ideas.  However, a significant difference is that the tensors in our network lack the isometric constraints imposed on MERA tensors, which are responsible for the finite-width ``light cone'' in MERA. Therefore our network does not preserve locality when viewed as an RG transformation, and should be viewed as a non-unitary MERA.
	
	The results presented here offer an exact analytic hierarchical tensor network representation of the ground state of
	a gapless system in the thermodynamic limit. Related works include networks based on network realizations of the Fourier transform~\cite{ferris2014fourier} and recent works on holographic tensor networks for disordered systems~\cite{jahn2020central}. While recent advances utilizing wavelets have allowed for the analytic construction of MERA that approximate the ground states of certain lattice CFTs with arbitrarily high precision ~\cite{Evenbly2016,haegeman2018rigorous}, these constructions only become exact in the limit of infinite bond dimension. The inability of a finite bond dimension MERA to achieve an exact representation of a CFT can be understood as a direct consequence of the finite-width light cone in MERA, which, at finite bond dimension, can only support a finite number of scaling operators~\cite{pfeifer2009entanglement}. This is incompatible with achieving an exact representation of a CFT, which typically possess an infinite set of scaling operators. However, this suggests that one should consider hierarchical networks of the form derived in this Article, which replace the (local) unitary disentangling with a (non-local) disentangling implemented by an MPO, more generally, as these do not have the limitation of only supporting a finite number of scaling operators. Indeed, our present work opens the exciting possibility that other systems, potentially including ground states of lattice CFTs, could also have an exact representation as a (finite bond dimension) network of this form. Notice also, that if the MPO disentangler was required to be a unitary operator~\cite{cirac2017matrix,csahinouglu2017matrix}, one could achieve a quasi-local RG transformation that may still be computationally viable as a variational ansatz. It thus remains an interesting avenue for future research to investigate whether a non-unitary hierarchical tensor network ansatz, which sacrifices the exact locality present in a standard MERA, could lead to improved numeric simulation algorithms.

	It has been noted that scale-invariant networks, and in particular MERA, have a special connection to holographic duals in the sense of the AdS/CFT correspondence \cite{swingle2012entanglement}. Here, the bulk of a MERA tensor network can be understood as a discrete realization of 3d anti-de Sitter space (AdS$_{3}$), identifying the extra radial holographic dimension with the RG group scale dimension in the MERA. %
	While the MERA tensor network is used to represent ground states of relativistic CFTs, where both time and space may be rescaled simultaneously, the tensor networks presented in this work perhaps more naturally describe ground states of non-relativistic field theories that are invariant under non-relativistic symmetry groups. These non-relativistic symmetry groups are characterized by anisotropic scale transformation of time and space 
	Lifshitz and Schrodinger field theories are well-known examples of such field theories.
	This relation may come to be valuable to the currently active attempts of constructing holographic duals for non-relativistic field theories with Lifshitz symmetry \cite{taylor2008non}
	or Schrodinger symmetry \cite{balasubramanian2008gravity,son2008toward}. 
	For a recent thorough review of Lifshitz holography, see \cite{taylor2016lifshitz} and references therein.
	
	It is natural to ask whether the tensor networks presented here can be generalized to other spin models. In another work~\cite{alexander2018rainbow}, we introduce another tile-based tensor network for the higher spin (colored) generalizations of the Motzkin and Fredkin models, showing how an exact network can describe the rainbow phases these models exhibit. %
	
	Our results highlight the versatility of tensor networks and their potential in describing complicated many body states, and in particular the power of self-similar structures such as MERA, or, in this case, a non-unitary hierarchical tensor network, in describing quantum critical phases.

	\emph{Acknowledgments.}
	We thank Amr Ahmadain, Zhao Zhang, Jacob Bridgeman, Elizabeth Crosson, Joseph Avron, Ramis Movassagh, Vladimir Korepin, Diana Vaman and Xiao-Liang Qi for discussions.
	The work of I.K. was supported in part by the NSF grant DMR-1508245 and NSF grant DMR-1918207.
	The work of R.N.A. was partially supported by National Science Foundation Grant No.~PHY-1630114. 
	The work of G.E. was undertaken thanks in part to funding from the Canada First Research Excellence Fund (CFREF).

	\newpage
	\appendix
	\onecolumn
	

	%
	\section{Definition of the Motzkin Hamiltonian}\label{app:A}
	The Motzkin Hamiltonian can be defined by first identifying each local spin-$z$ basis state $\{ \ket{1}, \ket{0}, \ket{-1}\}$ with a line segment $\{ \diagup,  \text{\textbf{---}}, \diagdown \}$, respectively. This allows us to represent states as a superposition of walks. The Hamiltonian, defined on a spin-chain with $2n$ sites reads:
	\begin{equation} H= \Pi_{\text{boundary}} + \sum_{j=1}^{2n-1}\Pi_{j} \label{cHam} 
	\end{equation} 
	where  $\Pi_{j}$ acts on the pair of spins $j,j+1$ and
	\begin{align*} 
		\Pi_{j} &= 
		|\Phi_{t}\rangle\langle\Phi_{t}|_{j,j+1} + |\Psi_{t}\rangle\langle\Psi_{t}|_{j,j+1} + |\Theta_{t}\rangle\langle\Theta_{t}|_{j,j+1},\\
		\Pi_{\text{boundary}} &= |-1\rangle\langle -1 |_1 + |1\rangle\langle 1|_{2n}.
	\end{align*} where  $\Phi,\Psi,\Theta$ are the following states on pairs of neighboring spins 
	\begin{align} 
		|\Phi_{t}\rangle &\propto |1,0\rangle - t|0,1\rangle, \\ |\Psi_{t}\rangle &\propto |0,-1\rangle - t|-1,0\rangle, \\ |\Theta_{t}\rangle &\propto |1,-1\rangle - t|0,0\rangle. \label{statess}
	\end{align} %
	
	To have periodic boundary conditions, the Hamiltonian in Eq.~(\ref{cHam}) can be simply modified to include a $\Pi_{2n}$ term while omitting the boundary terms $\Pi_{\text{boundary}}$.

	\section{Spin 1/2 case}\label{sec:TN2fred}
	To re-purpose the binary-height network for the spin-1/2 Fredkin model, we define  an operator $P\in \text{Hom}(\mathbb{C}^{3}, \mathbb{C}^{2})$ that projects out the $0$ component of each spin and maps $1\mapsto \tfrac{1}{2}$ and $-1 \mapsto -\tfrac{1}{2}$. It is given by 
	\begin{align}
		\begin{tikzpicture}
			\begin{scope} 
				\node {$\displaystyle P = \ket{\tfrac{1}{2}}\bra{1} + 
					\ket{-\tfrac{1}{2}}\bra{-1}=$};
			\end{scope}
			\begin{scope}[xshift = 2.4cm]
				\node {\includegraphics[width=0.03\linewidth]{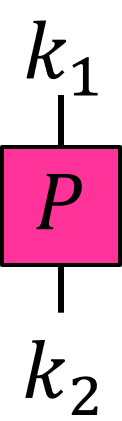}};
			\end{scope}
		\end{tikzpicture} \label{eq:Pop}
	\end{align}
	where the indices $k_{1}$ and $k_{2}$ are associated with the spin-1 and spin-$\tfrac{1}{2}$ degrees of freedom, respectively. Appending $P^{\otimes 2n}$ to the bottom of the binary-height tensor network yields a tensor network for the Fredkin model: 
	\begin{align}
		\includegraphics[width=0.35\linewidth]{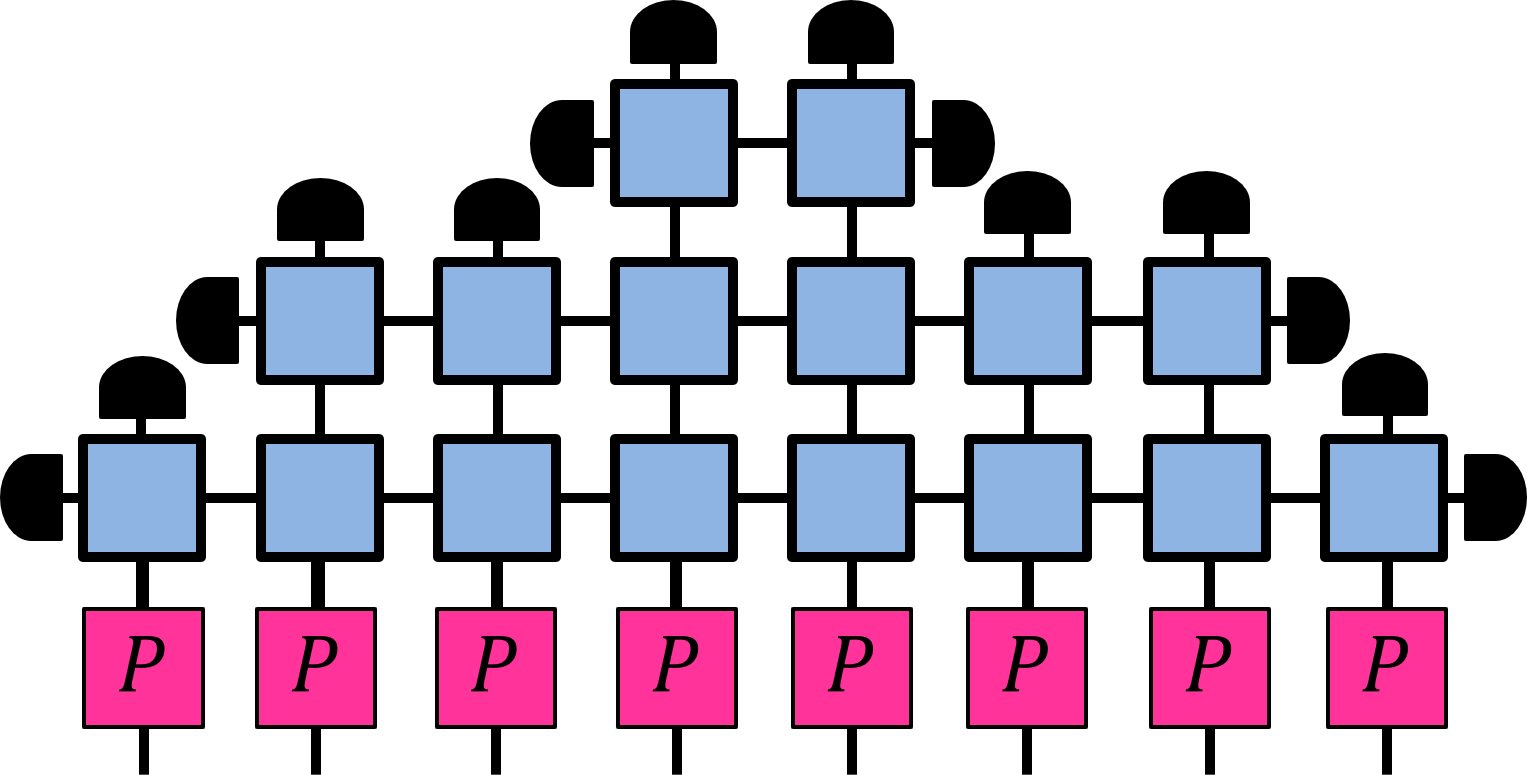}\label{eq:fredbintens}
	\end{align}
	
	We note that Salberger and Korepin previously presented an MPS description of the Fredkin state using the height representation of Dyck walks~\cite{salberger2017entangled}. However, the elementary building blocks of such networks have growing bond-dimension and lack the local structure of the networks we achieve here. 
	
	\section{Construction of renormalization tensor network via U(1)-symmetric tensors}\label{sec:const2}

	In this appendix we provide an alternative derivation of an exact hierarchical tensor network representation of the ground state of the spin-$1$ Motzkin spin chain. Here, in order to better connect with established tensor network methodology~\cite{singh2011tensor}, we formulate the solution in terms of U(1) invariant tensors instead of the `flux' preserving tiles discussed in the main text; however both approaches are ultimately equivalent. The derivation presented here is based on the dual notions of $\delta$-symmetric tensors, a restricted subclass of U(1) tensors, and {\it boundary-locked} networks. We first define these concepts before proceeding to demonstrate how they can be used to construct a network for the Motzkin spin chain.
	
	Let us recall the basics of U(1)-symmetric networks, as detailed in Ref.\cite{singh2011tensor}. A tensor network with U(1) symmetry is represented by an {\it oriented graph}, where each index has an associated direction (depicted with an arrow), such that the indices connected to a tensor can be regarded as either incoming or outgoing with respect to that tensor. Each index $i \in \{ 1,2,\ldots,d\}$ in the network is assigned a set of quantum numbers $\vec n^{(i)} = [n^{(i)}_1, n^{(i)}_2, \dots, n^{(i)}_d]$ with charges $n \in \mathbb{Z}$ associated to the $z$-component of the spin at that value of the index. We say that a tensor is U(1) symmetric if it is left invariant under transformation with a unitary representation of U(1) acting on each index (where outgoing indices should be transformed by the dual representation of U(1) with respect to incoming indices). It is known that U(1)-symmetric tensors are those that conserve particle number, such that a tensor component has zero weight unless the sum of the outgoing charges matches the sum of the incoming charges. For the purposes of constructing a solution to the Motzkin model, it is useful to restrict to a sub-class of symmetric tensors that we call $\delta$-{\it symmetric} tensors, which we define as U(1)-symmetric tensors where every (structurally non-zero) element is equal to unity. In other words, these are U(1)-symmetric tensors with component equal to unity if the total incoming charge $n_\textrm{in}$ matches the total outgoing charge $n_\textrm{out}$ and zero otherwise, which can be understood as a tensor version of the Kronecker-delta function.
	
	The concept of a {\it boundary-locked} tensor network is now defined. Given a lattice $\mathcal L$ of spin-1 sites, let $T$ be the tensor network built from U(1)-symmetric tensors, representing a $S_z = 0$ quantum state $\ket{\psi} \in \mathcal L$, and let $\ket{\phi}$ be a U(1)-invariant product state ($S_z = 0$) on $\mathcal L$. We say that network $T$ if {\it boundary-locked} if, by constraint of the U(1) symmetry, there is a single unique configuration of the internal indices in $T$ that can give a non-zero contribution to the scalar product $\braket{\phi \vert \psi}$ for any input product state $\ket{\phi}$ with $S_z = 0$. In other words, a network is boundary locked if, by fixing the boundary indices in a configuration compatible the $S_z = 0$ symmetry of the network, the internal indices are then `locked' in a unique configuration by the constraint that the total incoming U(1) charges must match the total outgoing charge in all tensors. It follows that a necessary (but not sufficient) constraint for a network to be boundary locked is that the irrep of charge $n$ on any tensor index is at most 1-fold degenerate (which allows us to specify the value that an index takes by the charge $n$ that it carries). 
	
	\begin{figure}[!t]
		\begin{center}
			\includegraphics[width=8cm]{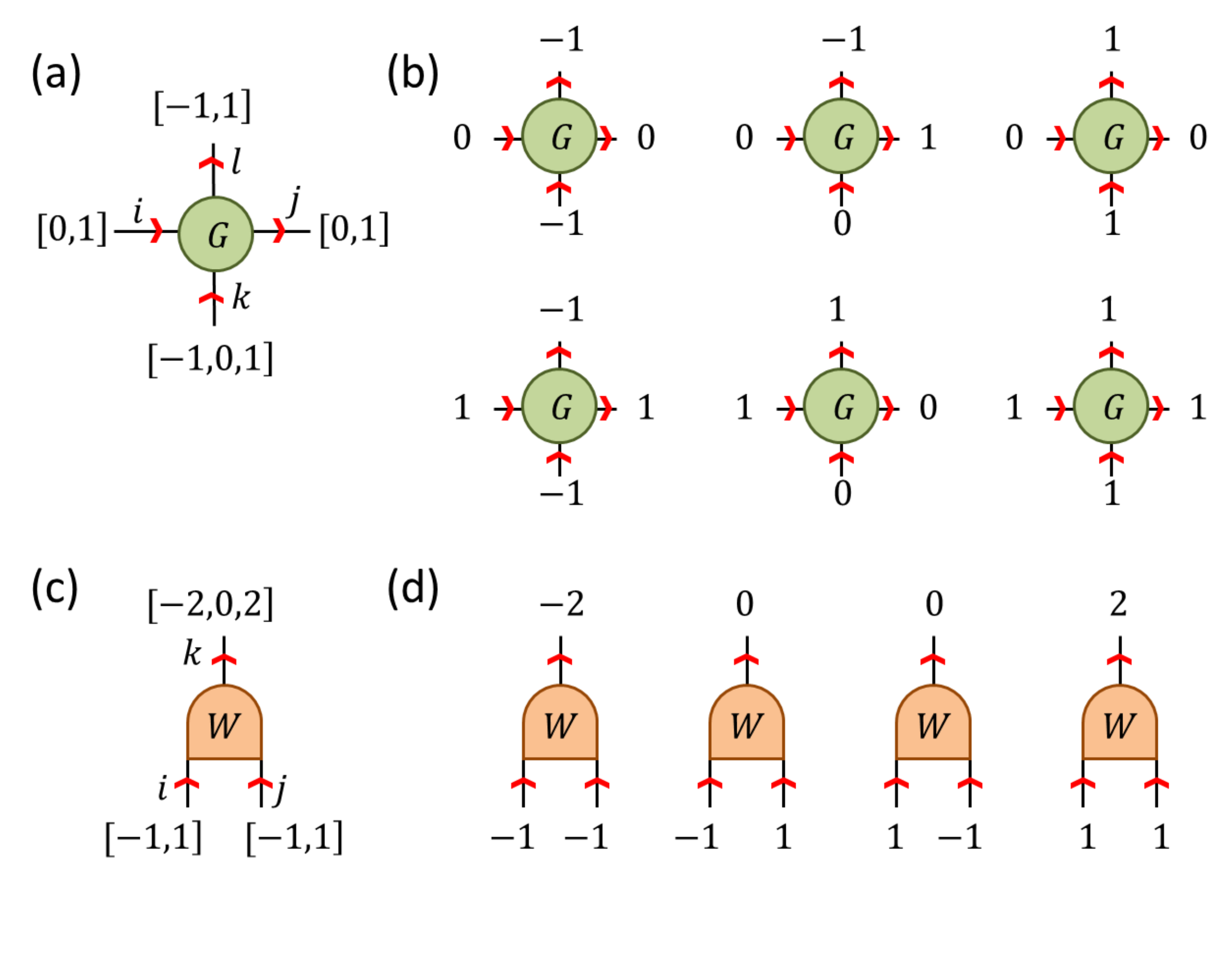}
			\caption{(a) Description of the U(1) charges on each index of the $G^{ij}_{kl}$ tensors. (b) List of the non-zero components of $G$, which have equal incoming and outgoing U(1) charge. (c) Description of the U(1) charges on each index of the $W^{ij}_k$ tensors. (b) List of the non-zero components of $W$. These have equal incoming and outgoing U(1) charge. }
			\label{fig:TensorDef}
		\end{center}
	\end{figure}
	
	Given that the notions of $\delta$-symmetric tensors and boundary-locked networks have been established, we are now able to construct the exact hierarchical network for the ground state of the Motzkin chain. We begin by focusing on a simpler task: given a finite chain of spin-$1$ sites $\mathcal L$ we describe how to construct a network representing the equal weight superposition U(1) states in the $S_z = 0$ spin symmetry sector (or equivalently, the superposition of all walks
	that start and end at zero height). This solution can later be refined to exclude the paths that take negative height values at any point, as described in the main text, such that the ground state of the original Motzkin spin chain is recovered. Let us consider a U(1)-invariant state $\ket{\psi} \in \mathcal{L}$ with $S_z = 0$; if $\ket{\psi}$ is described by a boundary locked network built from $\delta$-symmetric tensors it automatically follows that $\ket{\psi}$ must be the desired equal-weight superposition of all states in the $S_z = 0$ sector. This is true since the scalar product of $\ket{\psi}$ with any product state of the $S_z = 0$ spin sector must evaluate to unity, as only a single configuration of indices can contribute and all configurations have a total weight equal to unity, while the scalar product with any state outside of the $S_z = 0$ is trivially zero.
	
	\begin{figure}[!t]
		\begin{center}
			\includegraphics[width=8cm]{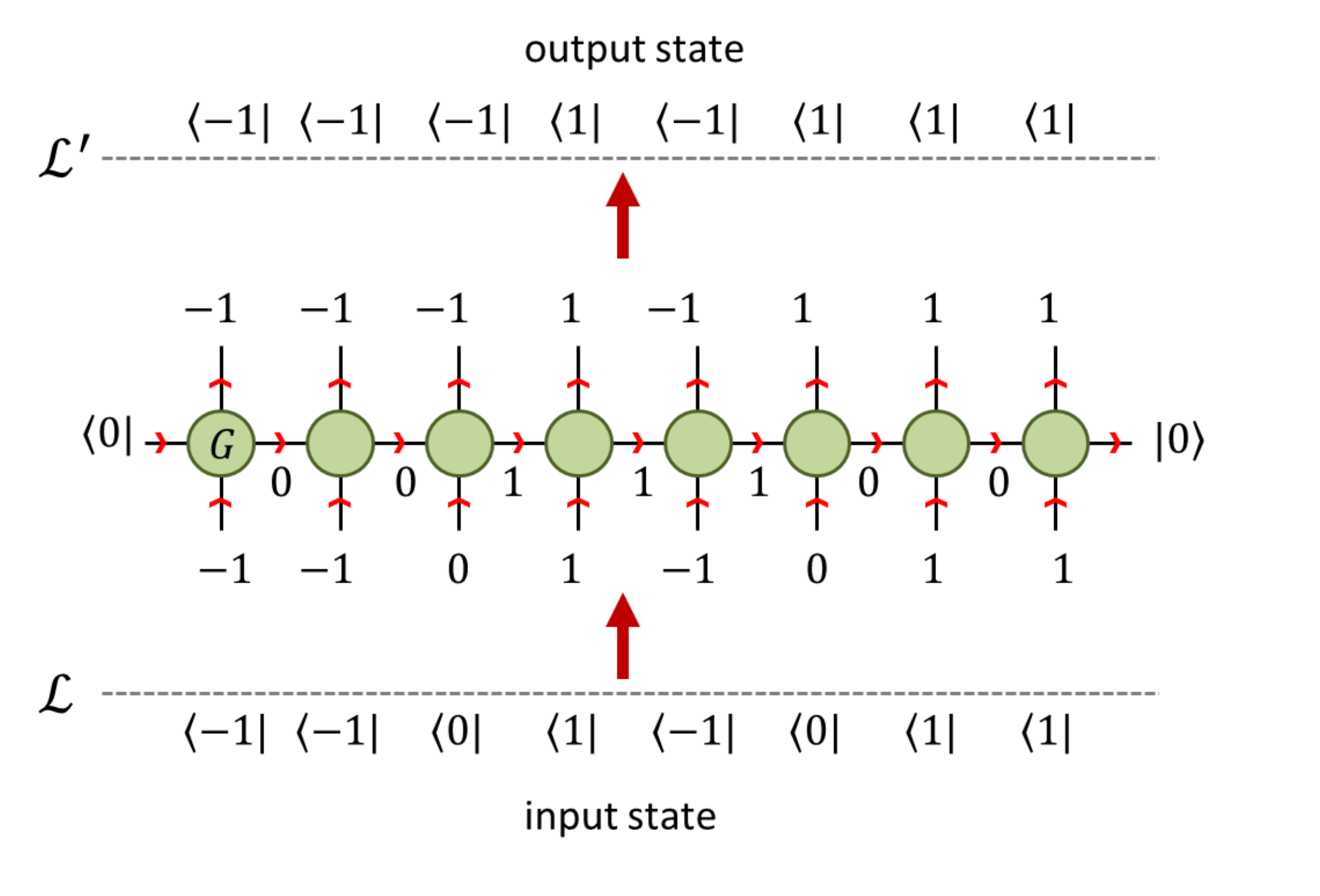}
			\caption{The MPO formed from $G$ tensors maps an input product state on lattice $\mathcal L$ of local dim $d=3$ to a single product state on lattice $\mathcal L'$ of local dim $d=2$.}
			\label{fig:MPO}
		\end{center}
	\end{figure}

	The remaining goal is thus is to build a boundary-locked network $T $of $\delta$-invariant tensors that constitutes a proper holographic realization: given a lattice $\mathcal L$ of $N$ sites, we want the network $T$ to be organized into $O(\log N)$ self-similar layers, each with some finite bond dimension that is independent of the system size $N$. Here we follow a similar construction as used in the MERA, and build the network from a sequence of coarse graining (CG) transformations, each of which is comprised of a {\it disentangling} step followed by a blocking step that reduces the number of lattice sites by a factor of $2$. However, instead of using disentanglers with finite local support, which prove inadequate for the exact construction, we instead represent the disentangling operation as an MPO. We build this MPO from copies of a four index $\delta$-invariant tensor $G^{ik}_{jl}$ as depicted in Fig.~{\ref{fig:TensorDef}(a-b)}. The virtual bond dimension of the MPO is set at $d=2$, with U(1) charges on these indices $\vec n^{(i)} = \vec n^{(j)} = [0,1]$. The incoming indices $k$ must match the spin-$1$ sites in $\mathcal L$, i.e. such that $\vec n^{(k)} = [-1,0,1]$, while the output indices $l$ are set at $d=2$ with U(1) charges $\vec n^{(l)} = [-1,1]$. Thus the action of this MPO is to map states on the $N$-site lattice $\mathcal L$, which has local dimension $d=3$, to states on an $N$-site lattice $\mathcal L'$ of local dimension $d=2$. If we assume $N$ to be even and that the virtual indices of the edge MPO tensors are fixed in the $\ket{0}$ state, then it can be seen that any $S_z = 0$ product state on $\mathcal L$ is mapped to a unique product state on $\mathcal L'$, see Fig.~\ref{fig:MPO} for example, consistent with the desire for a boundary-locked network. This follows as, once the input indices $i$ and $k$ have been set on a $G$ tensor, there is only a single choice of the output indices $j$ and $l$ that satisfies U(1) charge conservation. The blocking step of the CG transformation is now realized using a 3-index $\delta$-symmetric tensor $W^{ij}_k$, whose input indices carry U(1) charges $\vec n^{(i)} = \vec n^{(j)} = [-1,1]$ and whose output index carries $\vec n^{(k)} = [-2,0,2]$, see also Fig.~{\ref{fig:TensorDef}(c-d)}. This blocking step is again consistent with realizing a boundary locked network, and clearly has non-zero overlap with any state on $\mathcal L'$.
	
	\begin{figure}[!t]
		\begin{center}
			\includegraphics[width=8cm]{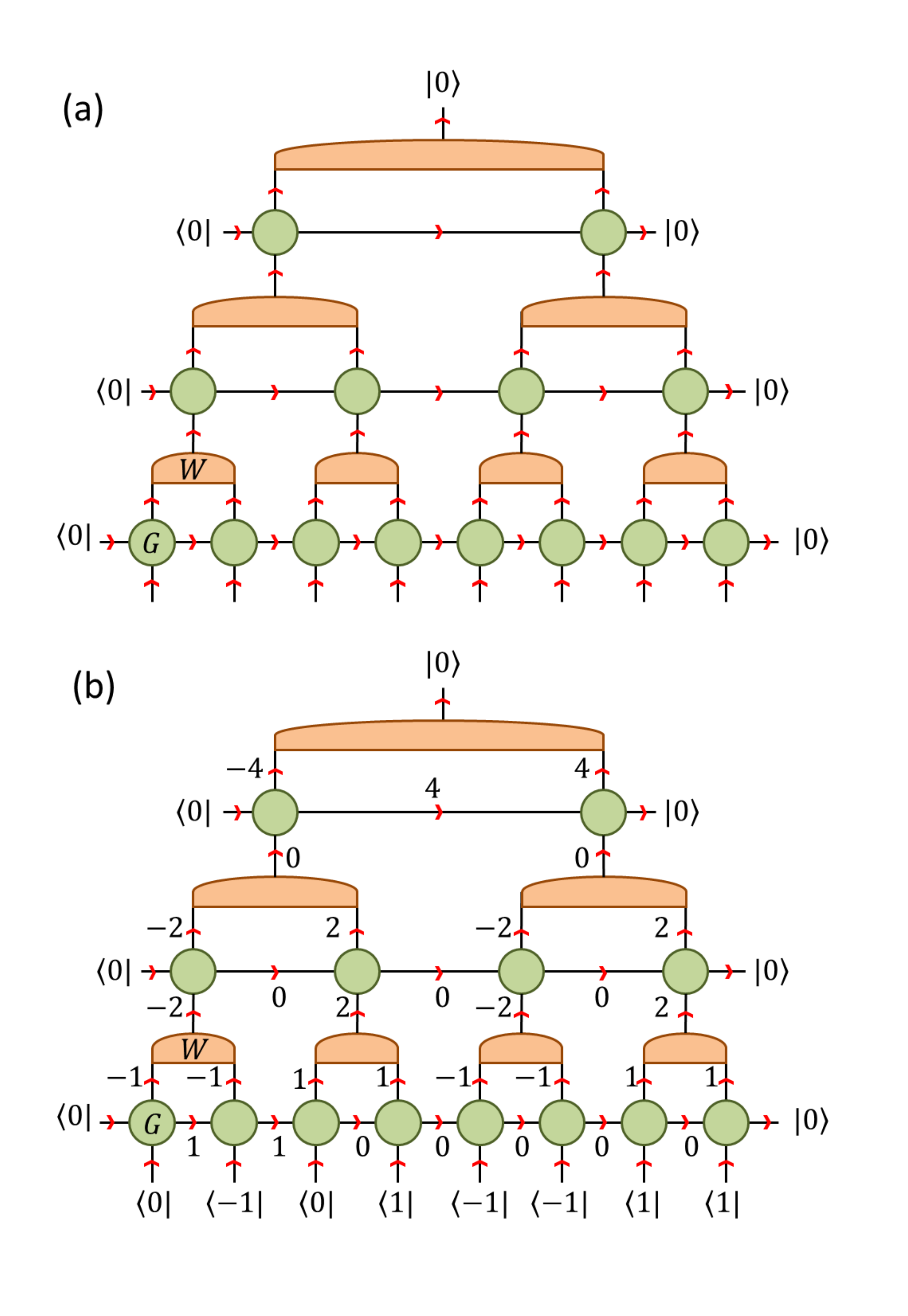}
			\caption{(a) Depiction of the holographic tensor network. (b) The network is {\it boundary-locked}; under the input product state (in the total spin $S_z = 0$ sector) the internal indices are fixed in a single unique configuration in order to satisfy the U(1) symmetry constraints on each tensor. }
			\label{fig:Network}
		\end{center}
	\end{figure}
	
	\begin{figure}[!t]
		\begin{center}
			\includegraphics[width=6cm]{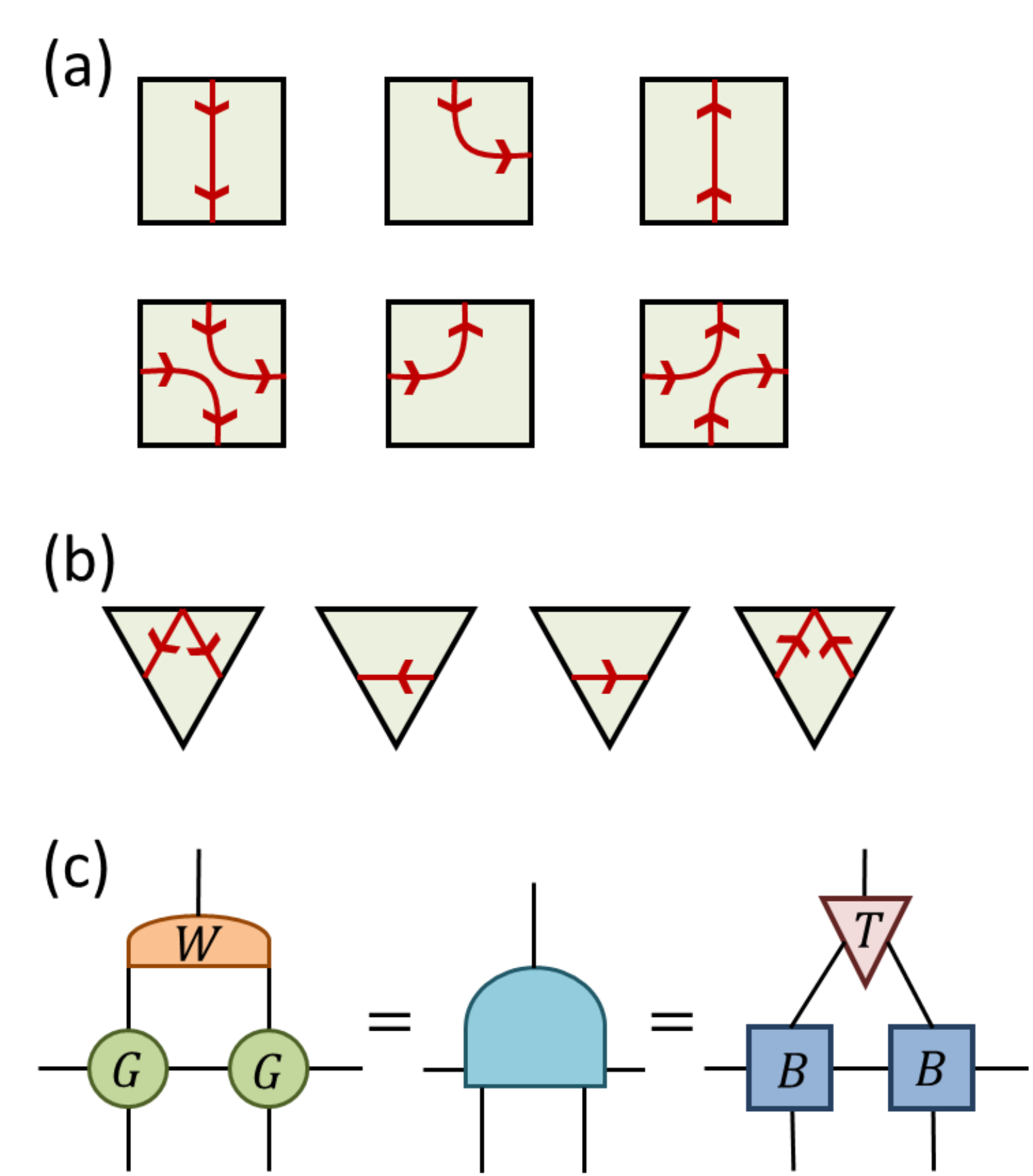}
			\caption{(a) Tile representation of the components of the $G$ tensor from Fig.~{\ref{fig:TensorDef}(b)}. (b) Tile representation of the components of the $W$ tensor from Fig.~{\ref{fig:TensorDef}(d)}. (c) The product of $G$ and $W$ tensors is identical to the product of $B$ and $T$ tensors from Eq.~(\ref{eq:renormunit}).}
			\label{fig:Compare}
		\end{center}
	\end{figure}
	
	One may then repeat this coarse graining transformation, consisting of disentangling with the MPO formed from $G$ tensors and then blocking with the $W$. Notice that the magnitude of U(1) individual charges carried on the indices of the $G$ and $W$ tensors doubles with each coarse graining step, but the content of these tensors otherwise remains unchanged. The output index on the final isometry, after taking $\log_2(N)$ coarse graining steps where $N$ is assumed to be a power of 2, is fixed in the $n = 0$ state to ensure that the network is in the total spin $S_z = 0$ sector, see Fig.~\ref{fig:Network}. Thus our goal is completed: we have an $S_z = 0$ holographic network that (i) is constructed entirely from $\delta$-symmetric tensors and (ii) is boundary-locked, which implies that it represents an equal weighted superposition of all states in the total spin $S_z = 0$ sector.
	
	In comparing the network derived in this appendix, Fig.~{\ref{fig:Network}(a)}, to that derived in the main text, Fig.~\ref{fig:rntndefs}, one sees that they both have an equivalent structure, with the $G$ and $W$ tensors derived in this appendix substituting for the $B$ and $T$ tensors in Fig.~\ref{fig:rntndefs}. However, upon expressing the $G$ and $W$ tensors in the `tiling' representation used in the main text, see Fig.~{\ref{fig:Compare}(a-b)}, it is seen that they correspond to a different set of tiles than do the $B$ tensors, Fig.~\ref{fig:TN2defs}, and the $T$ tensors, Fig.~\ref{fig:rntndefs}. Nevertheless, a component-wise analysis of all permissible tilings reveals that the product of two $G$ and a $W$ tensor is identical to the product of two $B$ and a $T$ tensor, as depicted in Fig.~{\ref{fig:Compare}(c)}. Thus one indeed concludes that the network constructed in this appendix is ultimately equivalent to that of the main text.

	\section{Equivalence between the renormalization and binary height networks for the periodic model}\label{app:periodicequiv}
	Here we prove that the height renormalization tensor network defined in Fig.~\ref{fig:rntndefs} represents the periodic Motzkin ground state. We do this by showing its equivalence to the generalized binary-height tensor network defined in Fig.~{\ref{fig:TN2defs}(b)}.

	In order to represent the state $\ket{\Psi_{k}}$, the boundary vectors $\ket{\vec{b}_{\text{L}}}$ and $\ket{\vec{b}_{\text{R}}}$ (see Eq.~(\ref{eq:bdvec})) are chosen to encode integers $p$ and $q$ such that $q-p=k$ and $\text{min}(p, q)\geq n$. We choose
	\begin{align}
		\text{max}(p,q) &= 2^{\lfloor \log_{2} 2n \rfloor +1} + 2n \label{eq:maxh}\\
		\text{min}(p,q) &= 2^{\lfloor \log_{2} 2n \rfloor +1} +2n -\vert k\vert,  \label{eq:minh}
	\end{align}
	We use a rectangular binary-height tensor network with $m=2^{\lfloor \log_{2} 2n \rfloor +1}$ many layers. This choice guarantees that $b_{m\text{L}}=b_{m\text{R}}=1$. See Fig.~{\ref{fig:TNequivalence}(a)} for an example on 8 spins.
	
	As usual, the upwards pointing indices on the $m^{\text{th}}$ layer have been contracted with the state $\ket{0}$. In fact, all such indices {\it must} take the value 0 regardless of this boundary condition.  To see this, suppose some of these indices took a value of 1 or -1, and let $\mu$ be the leftmost column with a non-zero value at the top. 
	
	Then,
	\begin{itemize}
		\item If the top index of $\mu$ takes value $-1$, the only valid tiling for $\mu$ is to use tile $A_{6}$ for all square cells. Then, the height encoded at the left boundary of $\mu$ is zero. However, from Eq.~(\ref{eq:minh}), $p> 2n \geq \mu$, and since it takes at least $y$ many columns to drop in height by $y$, there are no valid tilings of the lattice to the left of $\mu$. 
		
		\item If the top index of $\mu$ takes value $1$, the only valid tiling for $\mu$ is to use tile $A_{2}$ for all square cells. Then, the height encoded at the right boundary of $\mu$ is zero. However, from Eq.~(\ref{eq:minh}), $q>2n\geq 2n - \mu$, and since it takes at least $y$ many columns to climb in height by $y$, there are no valid tilings of the lattice to the right of $\mu$.
	\end{itemize}
	Because non-zero values at the top indices of the network yield invalid tilings, contractions of the network must evaluate to zero in such cases. 
	
	Therefore, there is some flexibility in choosing the upper boundary condition, i.e., we are free to replace the contraction with $\ket{0}^{\otimes 2n}$ with any other tensor that has equal support over this state. In particular, we can use a binary tree of triangle $T$ tensors, as shown in Fig.~{\ref{fig:TNequivalence}(b)}. Tiling the entire pyramid with $D_{5}$ tiles shows that it has support over the $\ket{0}^{\otimes 2n}$ subspace.

	\begin{figure}
		\begin{center}
			\includegraphics[width=\linewidth]{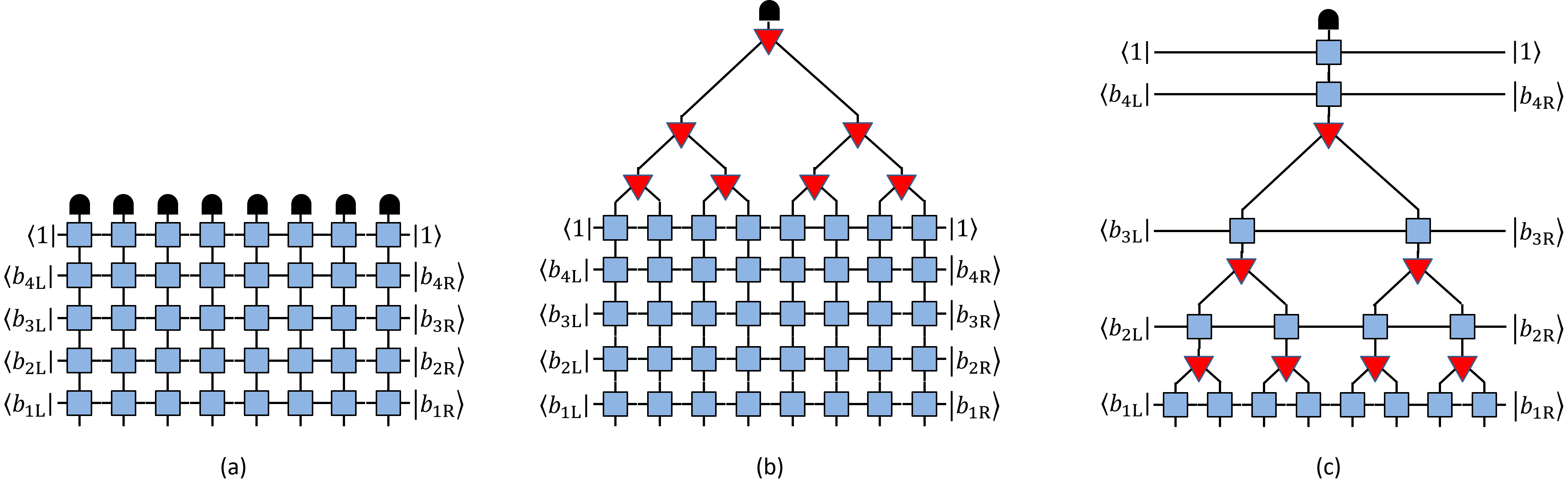}
		\end{center}
		\caption{$\mathbf{(a)}$ Rectangular binary-height tensor network with left and right boundary conditions corresponding to Eqs.~(\ref{eq:maxh}) and (\ref{eq:minh}). $\mathbf{(b)}$ Equivalent tensor network with the top boundary contracted with a binary tree of triangle $T$ tensors. $\mathbf{(c)}$ Tensor network after the zipper lemma has been applied to all $T$ tensors.} \label{fig:TNequivalence}
	\end{figure}

	Next, we require following lemma: 
	\begin{lemma}[Zipper Lemma]
		The following tensor networks are equivalent:
		\begin{align}
			\includegraphics[width=0.2 \linewidth]{zipperlemma2}\label{eq:zipperlemma}
		\end{align}
	\end{lemma}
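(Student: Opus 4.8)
The statement to prove is the \emph{Zipper Lemma} (Eq.~\ref{eq:zipperlemma}), which asserts an equivalence of two small tensor networks built from the tiles/tensors $B$ and $T$ defined above. Although the excerpt cuts off before the picture is fully specified, the surrounding text pins down the content: the zipper lemma is the elementary local move underlying the renormalization step, expressing how a triangle tensor $T$ absorbs (or commutes past) a pair of square tensors $B$, and it is the identity whose repeated application ``zips up'' the $O(n\log n)$ binary-height network into the $O(n)$ renormalization network. Both sides are finite tensors with the same small set of open legs, so the cleanest route is a direct, component-wise verification.

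\medskip
\noindent\textbf{Plan of attack.} The plan is to prove the equivalence by exhibiting a bijection between valid tilings of the left-hand network and valid tilings of the right-hand network that preserves the open-edge (boundary) labels, and to check that matched tilings carry equal weight (here all weights are unity, since we are in the $t=1$ case and every $\delta$-symmetric tile contributes $1$). First I would fix the open legs of both diagrams to arbitrary admissible boundary values in $\{-1,0,1\}$ and treat these as external data. Since a contracted tensor network evaluates, entry by entry, to the number of valid tilings consistent with the prescribed boundary edges, it suffices to show that for every choice of boundary labels the two networks admit the \emph{same number} of completions. Because $B=\sum_{l=1}^{6}A_l$ and $T=\sum_{l=1}^{7}D_l$ are sums of rank-one (delta) tiles, each internal edge is forced by the matching condition, so in fact each side will typically have at most one valid completion: the task reduces to checking that a valid completion exists on the left if and only if one exists on the right.

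\medskip
\noindent\textbf{Key steps in order.} First, enumerate the constraints: the $B$ tiles enforce the ``flux'' conservation $(\text{top})+(\text{left incoming arrows}) = (\text{bottom})+(\text{right})$ of the binary-addition rule, while the $T$ tile enforces $t_{ijk}=\delta_{i+j,k}$, i.e. $k=i+j$ with $i,j,k\in\{-1,0,1\}$ and the exclusion of $i=j=\pm1$. Second, on each side solve for the internal index in terms of the boundary indices using these delta constraints; the $\delta$-symmetric (U(1)-charge-conserving) structure guarantees the internal configuration is uniquely ``locked'' once the boundary is fixed, exactly the boundary-locked property invoked in Appendix~\ref{sec:const2}. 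Third, compare the resulting boundary-to-boundary maps: show that the composite constraint obtained by contracting $B$ with $B$ across the shared internal leg and then fusing with $T$ coincides, as a relation among the surviving open indices, with the constraint produced by the other ordering on the right-hand side. Concretely this is the arithmetic identity $\lfloor (s_1+s_2)/2\rfloor$ together with the carry bit $b_\text{R}=b_\text{L}+s_1+s_2 \ (\mathrm{mod}\ 2)$ from Eq.~\ref{eq:renormunit} being reproduced by both tile orderings, which I would verify by running through the $A_1,\dots,A_6$ and $D_1,\dots,D_7$ cases. The cleanest packaging is to invoke the U(1)-charge/delta bookkeeping: both networks conserve the same total charge across the same cut, and both are built from tiles of unit weight, so equality of the contracted tensors is equivalent to equality of the charge-flow relations, which are manifestly identical.

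\medskip
\noindent\textbf{Main obstacle.} The genuine work, and the part most prone to error, is the finite but fiddly case analysis establishing the tiling bijection: one must be careful about the carry-bit degree of freedom (the $b_\text{L}$ versus $b_\text{R}$ branching in the table of Fig.~\ref{fig:TNRNtable2}(a)) and confirm that the binary-addition carry propagating through the two $B$ tiles is consistent with the single arithmetic relation imposed by the $T$ tile, including the boundary cases where an arrow starts or terminates. I expect the subtle point to be verifying that no spurious extra completion appears on either side from the ``flat'' $A_4$ tile or the forbidden double-arrow configurations, so I would organize the check by the value of the carry bit and by whether each input segment is up, flat, or down, reducing the proof to a short exhaustive table rather than a clever global argument. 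Once that table is filled in and both sides are seen to agree edge-for-edge and weight-for-weight, the lemma follows immediately, and its sequential application then yields the full network equivalence claimed in the RG construction.
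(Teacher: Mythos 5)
Your proposal is correct and follows essentially the same route as the paper: the paper also proves the lemma by an exhaustive, boundary-indexed enumeration of all valid tilings of the elementary two-$B$-plus-one-$T$ block (36 cases, fixed by the left carry indices in $\{0,1\}$ and the bottom spin indices in $\{-1,0,1\}$), exhibits a weight-preserving isomorphism between the tilings of the two sides, and then obtains the full statement by sequential application of this local identity. Your additional framing via U(1)-charge conservation and boundary-locking is consistent with the paper's alternative construction in Appendix~\ref{sec:const2} but does not change the substance of the argument.
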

	\begin{proof}
		By exhaustive search all valid tile configurations (see Fig.~\ref{eq:zipperlemmaproof}, the following two tensor networks are equivalent
		\begin{align}
			\includegraphics[width=0.2 \linewidth]{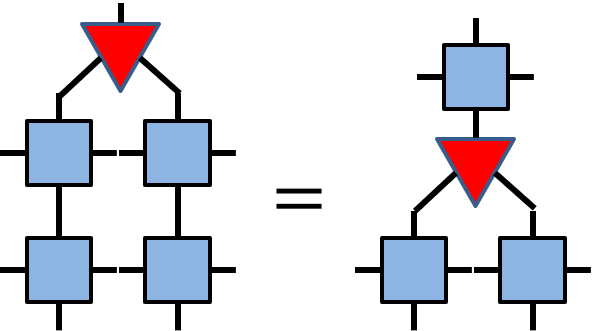}\label{eq:zippercorr}
		\end{align}
		The lemma follows by sequential application of this identity. 
	\end{proof}
	Using this, we can pull a horizontal layer of triangular tensors in Fig.~{\ref{fig:TNequivalence}(b)} downwards through the tensor network. At each step, two $B$ tensors are merged into one. Pulling down each triangular tensor is analogous to closing a zipper. The end result is shown in  Fig.~{\ref{fig:TNequivalence}(c)}.
	
	The final step to prove equivalence with the height renormalization tensor network from Fig.~{\ref{fig:rntndefs}(a)} is to note that the top $B$ tensor from Fig.~{\ref{fig:TNequivalence}(c)} can be removed, since
	\begin{align}
		\includegraphics[width=0.3 \linewidth]{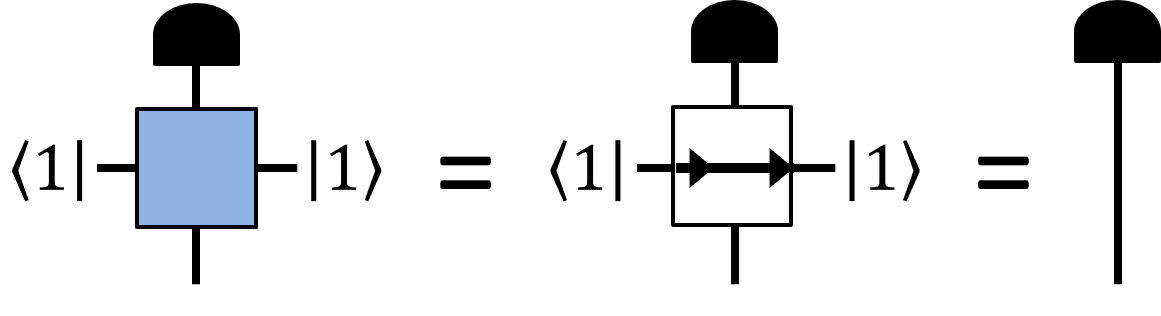}\label{eq:capremoval}
	\end{align}
	Removing the top $B$ tensor modifies the left and right boundary heights so that now $\text{max}(p, q) = 2n$ and $\text{min} (p, q) = 2n - \vert k\vert$.
	
	\begin{figure*}
		\includegraphics[width=1 \linewidth]{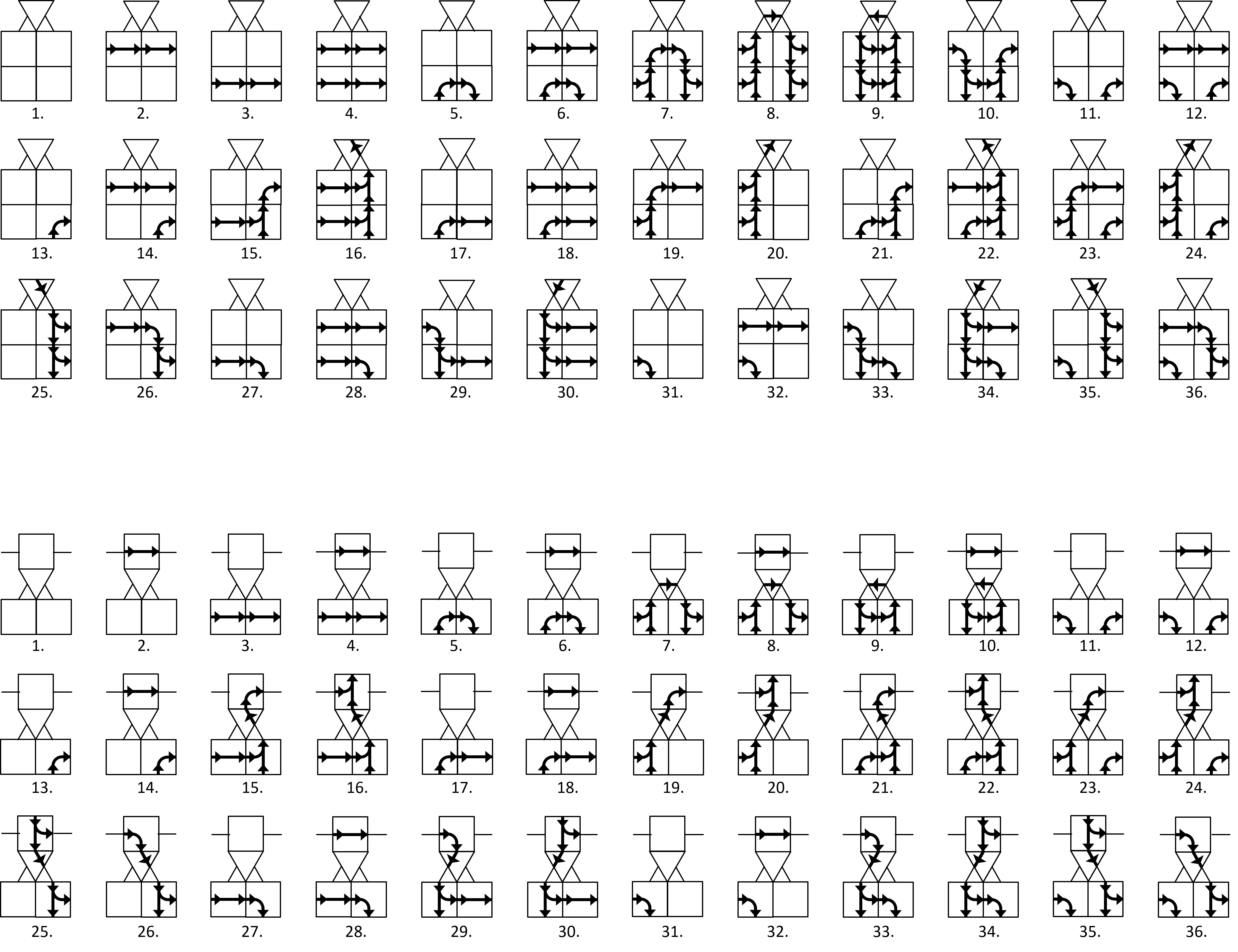}\caption{Each tiling of the networks in Eq.~(\ref{eq:zippercorr}) is uniquely specified by the values on the left and bottom pointing indices. These can take values $\{0, 1\}$ and $\{-1, 0, 1\}$, respectively. This results in a total of 36 possible tilings. We list these for both networks and find an isomorphism between configurations that share boundary conditions.}\label{eq:zipperlemmaproof}
	\end{figure*}
	
	\section{Equivalence between the renormalization and binary height networks for the original model}\label{sec:zipperproof2}
	Here we show that the height renormalization tensor network shown in Fig.~\ref{fig:rntndefs2} is a valid representation of the Motzkin model. Our proof will follow a similar trajectory to the case with periodic boundary conditions.
	
	We begin with the step-pyramid of $B$ tensors from Fig.~\ref{fig:TN2defs}. We will embed the width $2n$ pyramid within a $(\lfloor \log_{2} 2n \rfloor +1)\times 2n$ rectangle (recall that this does not change the represented state). We are also free to append projectors $\Pi$ (see Eq.~(\ref{eq:Piproj})) to the base of the network because the $B$ tensors contain no tiles with support on $\omega$ (they have no dotted lines). The tensor network shown in Fig~{(\ref{fig:TNequivalence2})(a)} incorporates both these changes.

	\begin{figure}
		\begin{center}
			\includegraphics[width=\linewidth]{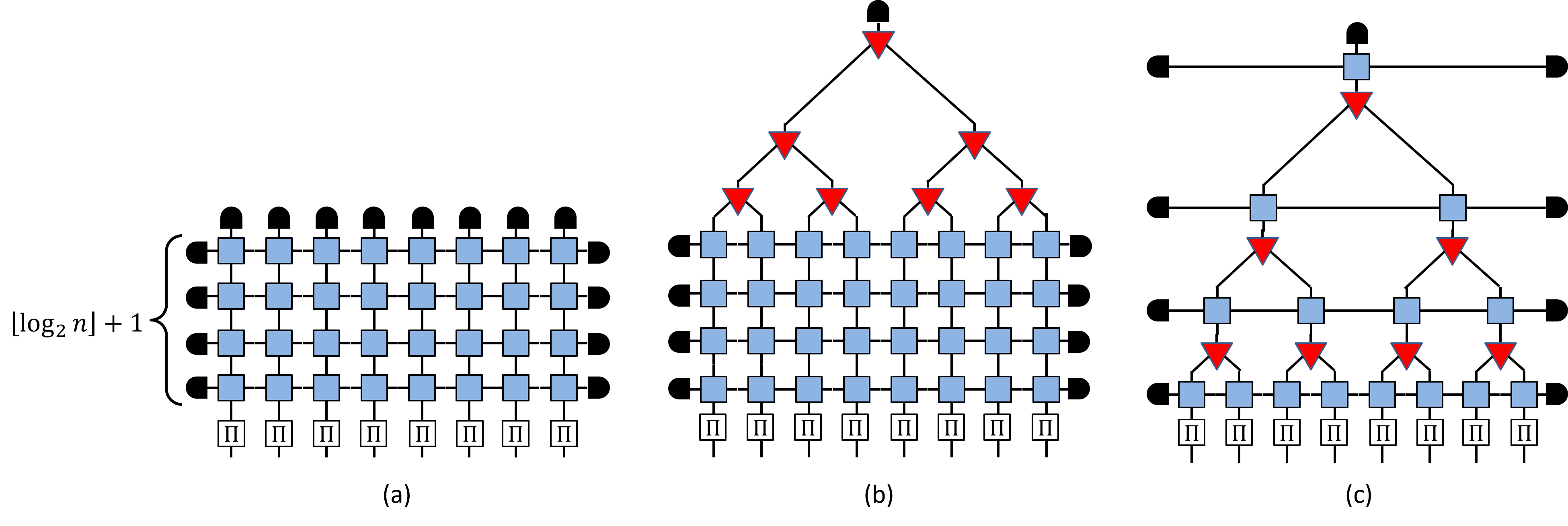}
		\end{center}
		\caption{$\mathbf{(a)}$ Binary height tensor network made up of $B$ tensors, vectors $\ket{0}$, and projectors $\Pi$. This is equivalent to the step-pyramid binary-height tensor network from Fig.~\ref{fig:TN2defs}. $\mathbf{(b)}$ The network from (a) has been modified so that the $B$ tensors are replaced with $C$ tensors, and the top boundary of $\ket{0}^{\otimes 2n}$ has been replaced with a tree of triangular $S$ tensors.}\label{fig:TNequivalence2}
	\end{figure}
	
	Next, we replace each of the $B$ tensors in Fig~{(\ref{fig:TNequivalence2})(a)} with $C$ tensors defined in Eq.~(\ref{eq:TN2Bdef}). Though the $C$ tensors contain two additional tiles ($A_7$ and $A_8$), the new network includes no additional tilings. To see this, note that $A_7$ and $A_8$ contain vertical dotted lines. If these tiles appear in some column, then any valid tiling of that column must connect the dotted line to the bottom of the network. Contraction with $\Pi$ at the bottom of that column will evaluate the network to zero. Next, we replace the top boundary $\ket{0}^{\otimes 2n}$ vector with a tree of triangular $S$ tensors (defined in Eq.~(\ref{eq:Sdef})). These changes are incorporated into the tensor network shown in Fig.~{\ref{fig:TNequivalence2}(b)}. This tensor network is equivalent to the network shown in (a). 
	
	To prove this, we will show that the indices at the bottom of the pyramid (equivalently, the indices at the top of the square lattice) in Fig.~{\ref{fig:TNequivalence2}(b)} must take the value zero in order for the network contraction to give a non-zero value. 
	Suppose some of these indices took a value of 1, -1, or $\omega$, and let $\mu$ be the leftmost column of the square lattice with a non-zero value at the top. 
	
	Then,
	\begin{itemize}
		\item If the top index of $\mu$ takes value $1$, the only valid tiling for $\mu$ is to use tile $A_{2}$ for all square cells. Then, the height encoded at the left boundary of $\mu$ is $2^{\lfloor \log_{2} 2n \rfloor +1}$. Since it takes at least $y$ many columns to climb in height by $y$, and the left boundary of the network is set to height zero, there is no valid tiling that is compatible with both the left boundary of the network and the right side of the column $\mu$. 
		\item If the top index of $\mu$ takes value $-1$, then the tiling of the pyramid network of $S$ tensors must include some $E_{7}$ tiles. Consider the topmost $E_{7}$ tile(s). Any valid tiling of the pyramid must connect this dotted line to the top of the pyramid. Then, contracting with the $\ket{0}$ vector at the top index of the top $S$ tensor will evaluate the network to zero.
		\item  If the top index of $\mu$ takes value $\omega$, then $\mu$ can be tiled only with $A_{8}$. Contraction with $\Pi$ at the base of $\mu$ evaluates the network to zero. 
	\end{itemize}
	Therefore, using a tree of $S$ tensors as the upper boundary of the square lattice network is equivalent to contracting with $\ket{0}^{\otimes 2n}$.
	
	Now, we reprove the zipper lemma (lemma~(\ref{eq:zipperlemma})) with tensors $C$ and $S$ replacing tensors $B$ and $T$, respectively. 
	
	\begin{proof}
		First we prove the equivalence of the tensor networks shown in Eq.~(\ref{eq:zippercorr}). We do this by proving an equivalence of tilings. The tilings of each network are specified by assigning values of $\{0, 1\}$ to the left facing indices and values of $\{-1, 0, 1, \omega\}$ to the indices on the bottom. This yields a total of 64 distinct tilings. 34 of these were shown already in Fig.~\ref{eq:zipperlemmaproof}; only numbers 9 and 10 cannot be included because $S$ does not contain the tile $D_{7}$. The remaining 30 are shown in Fig.~\ref{eq:zipperlemmaproof2}.   
		
		The zipper lemma follows from sequential application of the identity in Eq.~(\ref{eq:zippercorr}).
	\end{proof}
	
	\begin{figure*}
		\includegraphics[width=1 \linewidth]{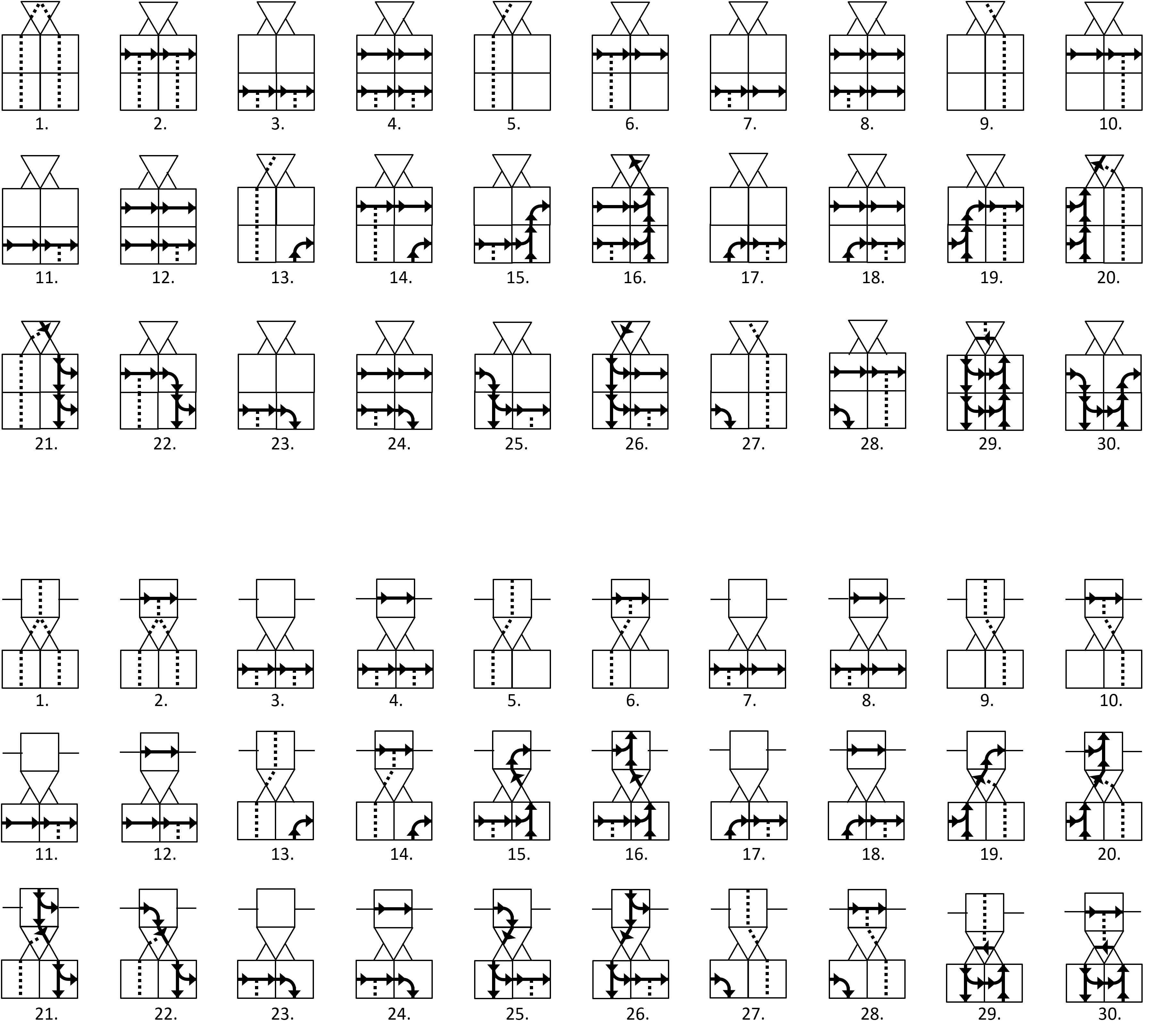}
		\caption{A subset of the tilings of the networks in Eq.~(\ref{eq:zippercorr}) using the $C$ and $S$ tensors.}\label{eq:zipperlemmaproof2}
	\end{figure*}
	
	Using the zipper lemma, we can pull a horizontal layer of triangular tensors in Fig.~{\ref{fig:TNequivalence2}(b)} downwards through the tensor network. At each step, two $C$ tensors are merged into one. Pulling down each triangular tensor is analogous to closing a zipper. The end result is shown in  Fig.~{\ref{fig:TNequivalence2}(c)}.
	
	The final step to proving equivalence between the original binary-height tensor network Fig.~{\ref{fig:TN2defs}(a)} and the height renormalization tensor network (Fig.~{\ref{fig:rntndefs2}(a)} is to note that the top $C$ tensor from Fig.~{\ref{fig:TNequivalence2}(c)} can be removed, since
	\begin{align}
		\includegraphics[width=0.3 \linewidth]{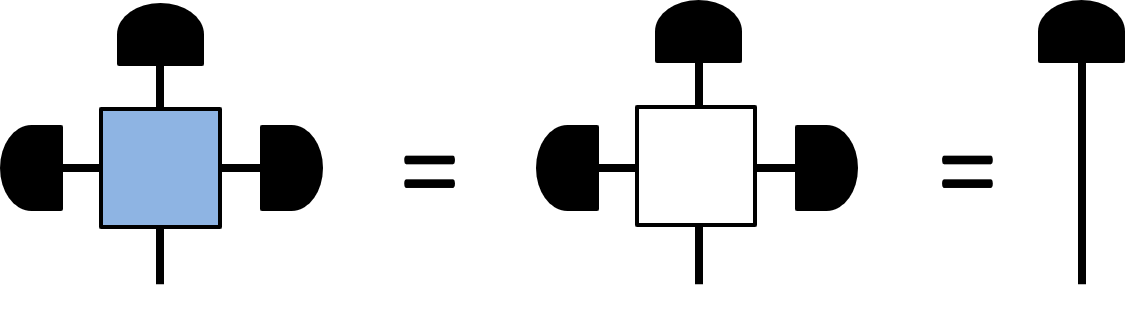}\label{eq:capremoval2}
	\end{align}
	where the square tile in the middle network is $A_4$.

	\section{Area weighted walks}\label{sec:TNtgen}
	The ground state of the $t>0$ area-weighted case (with open boundary conditions) is
	\begin{align} |\text{GS (t)}\rangle = \frac{1}{\mathcal{N}}\sum_{\substack{w \in \{ \text{Motzkin}\ \text{walks}\}}} t^{\mathcal{A}(w)} |w\rangle
		\label{cgs:Supp}. \end{align} 
	Note that $\mathcal{A}(w)$, the area under the walk $w$, is merely the sum of the heights at the vertices connecting adjacent walk segments, i.e, 
	\begin{align}
		\mathcal{A}(w) = \sum_{k=1}^{2n} h_{w}(k)
	\end{align}
	where $h_{w}(k)$ is the height of $w$ between the $(k-1)^{\text{th}}$ and $k^{\text{th}}$ segments. These heights can be defined via the following eigenvalue equation 
	\begin{align}
		\mathcal{H}_{k} \ket{w} = h_{w} (k) \ket{w}
	\end{align}
	where
	\begin{align}
		\mathcal{H}_{k} = \sum_{m=1}^{k} S_{Z, m},
	\end{align}
	and where $S_{Z,m}$ is spin-$z$ operator on the $m^{\text{th}}$ spin. 
	
	Thus,
	\begin{align}
		\ket{\text{GS}(t)} &= \prod_{k=1}^{2n} e^{\mathcal{H}_k} \ket{\text{GS}(1)}\\
		&=\prod_{m=1}^{2n} t^{(2n-m)S_{Z, m}} \ket{\text{GS}(1)}
	\end{align}
	Below we will show how operators of the form $t^{\alpha S_{Z,m}}$ can be absorbed into the binary height and renormalization tensor networks by redefining their building blocks.

	\subsubsection{Binary height network}
	We will denote
	\begin{align}
		\begin{tikzpicture}
			\begin{scope} 
				\node {$\displaystyle t^{\alpha S_{Z}} = $};
			\end{scope}
			\begin{scope}[xshift = 1.3cm]
				\node {\includegraphics[width=0.05\linewidth]{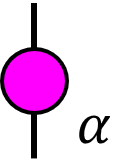}};
			\end{scope}
		\end{tikzpicture} \label{eq:triangletensor:Supp}
	\end{align}
	Then, by appending such tensors to the physical indices of the binary height network, we can represent the area weighted case as shown below for the $2n=8$ spin example.
	\begin{align}
		\includegraphics[width=0.45\linewidth]{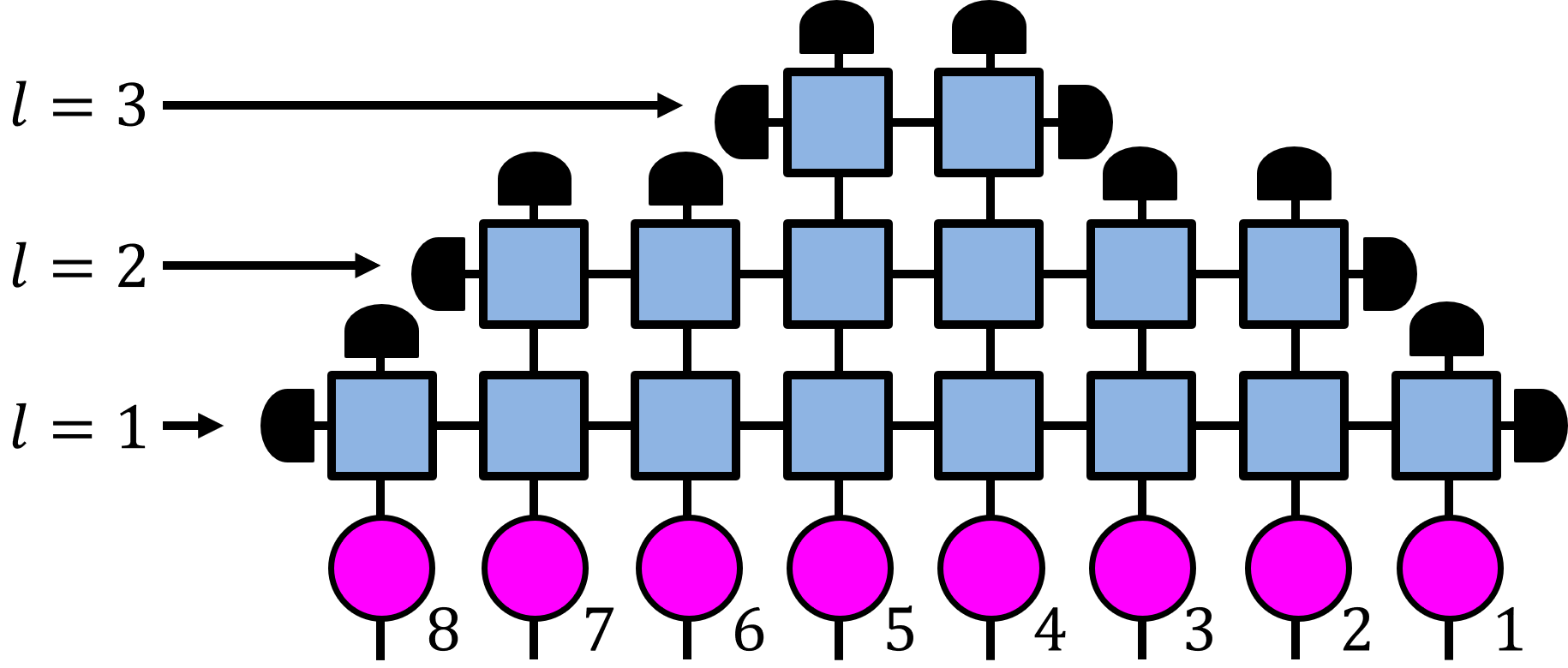}\label{eq:binheightweight} 
	\end{align}
	Note that the square tensors have the following symmetry with respect to the $S_{Z}$ operator:
	\begin{align}
		\includegraphics[width=0.5\linewidth]{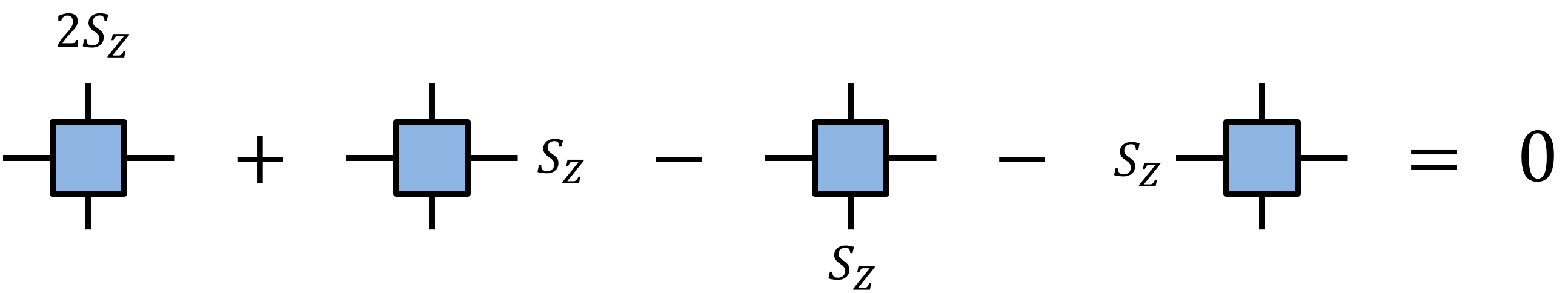}\label{eq:sqsymrel}
	\end{align}
	(in fact, each $A_i$ tile individually satisfies this relation).
	
	Using this identity, we can push the circle tensors onto the virtual legs one layer at at time
	\begin{align}
		\includegraphics[width=0.6\linewidth]{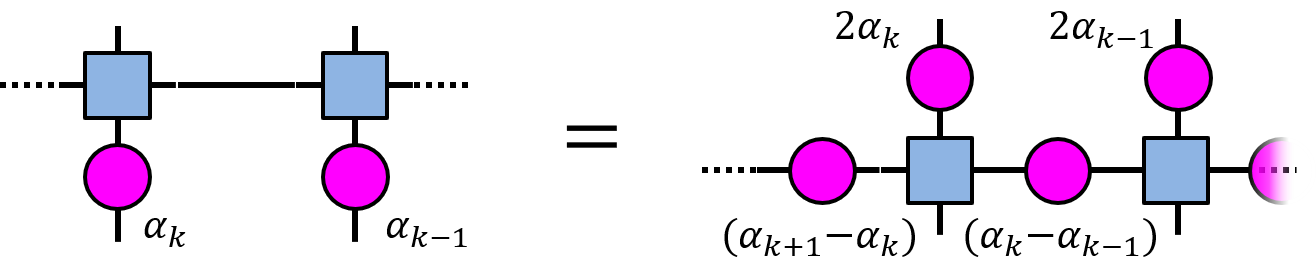}\label{eq:squpushthru}
	\end{align}
	Once the circle tensors reach the top boundary of $\ket{0}$ vectors, they vanish. All  remaining circle tensors reside on horizontal virtual indices and have $\alpha=2^{l-1}$, where the $l$-index indicates which row they appear in (see Eq.~(\ref{eq:binheightweight})).Next, by taking the square root, we split each such circle tensor into two with $\alpha=2^{l-2}$. Now every square tensor has a unique pair of adjacent circle tensors. The effect of these is to reweigh the $A_i$ tiles that are summed by each square, i.e., all square tensors on layer $l$ are modified so that
	\begin{align}
		\includegraphics[width=\linewidth]{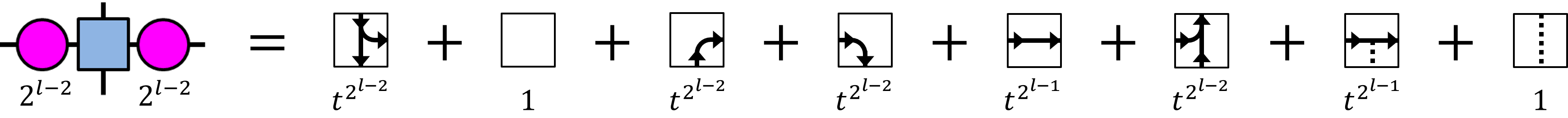} \label{eq:squredef}
	\end{align}
	Note, that each tile picks up a factor of $t^{2l-2}$ per horizontal arrow segment. Also, the final two tiles have been included in order to later generalize to the renormalization tensor network. Eq.~(\ref{eq:squredef}) remains valid in this case because we define $S_{Z}\ket{\omega}=0$. For the purposes of describing only the binary height network, the final two tiles in Eq.~(\ref{eq:squredef}) can be ignored.
	
	Thus, the binary height tensor network can be modified to include non-unit values of $t$ by introducing a layer-dependent reweighing of the tiles. Next we will generalize this procedure to the renormalization tensor network.
	
	\subsubsection{Renormalization tensor network}
	We can append circle tensors to the bottom of the renormalization tensor network to represent the area-weighted ground state. Below is an example for $2n=8$ spins.
	\begin{align}
		\includegraphics[width=0.55\linewidth]{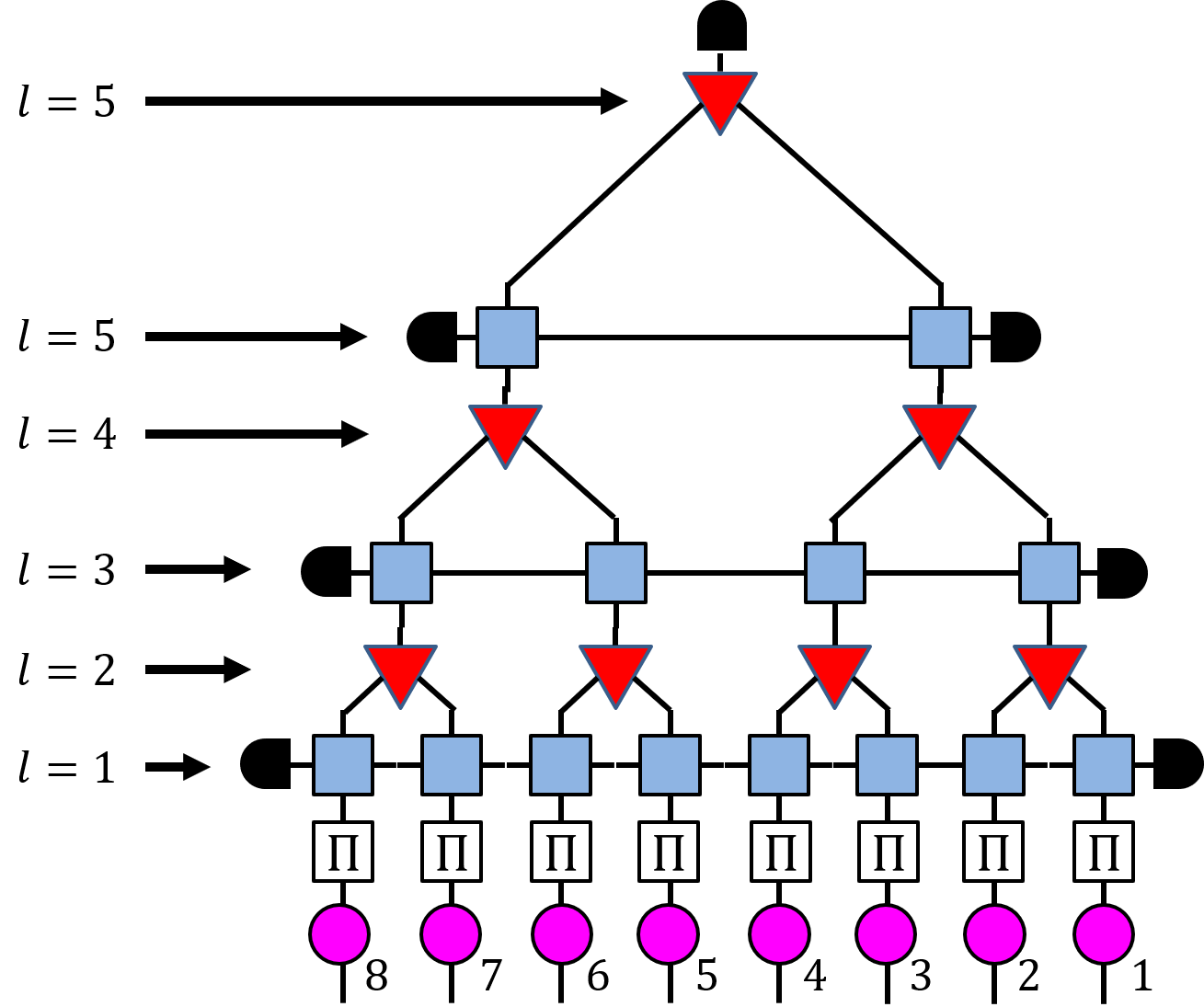} \label{eq:rnweight}
	\end{align}
	The circle tensors can be pushed upwards through the projectors $\Pi$ because they commute. Then, they can be pushed through a layer of square tensors using Eq.~(\ref{eq:squpushthru}) in the same way as described above. 
	
	The triangle tensors satisfy the following symmetry relation 
	\begin{align}
		\includegraphics[width=0.4\linewidth]{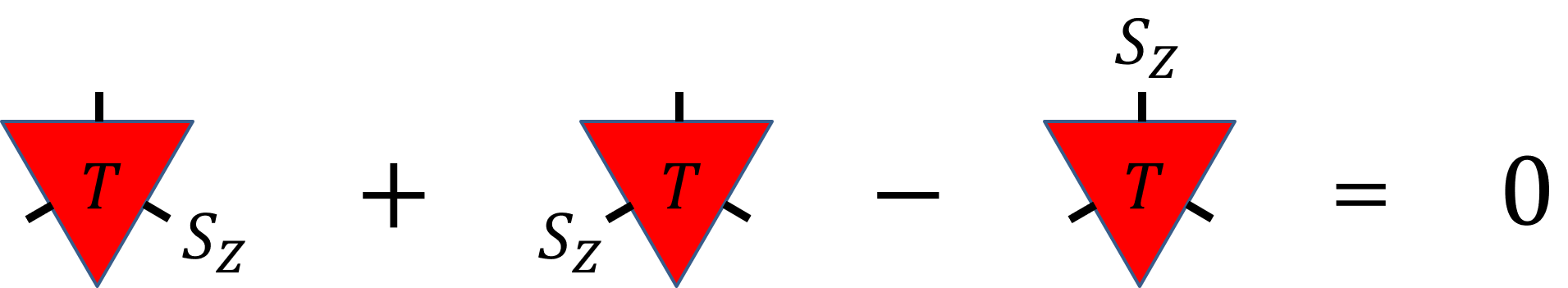}
	\end{align}
	(this relation also holds true for the each individual tile, $E_i$). From this, we can derive a ``push-through'' relation for the circle tensors: 
	\begin{align}
		\includegraphics[width=0.4\linewidth]{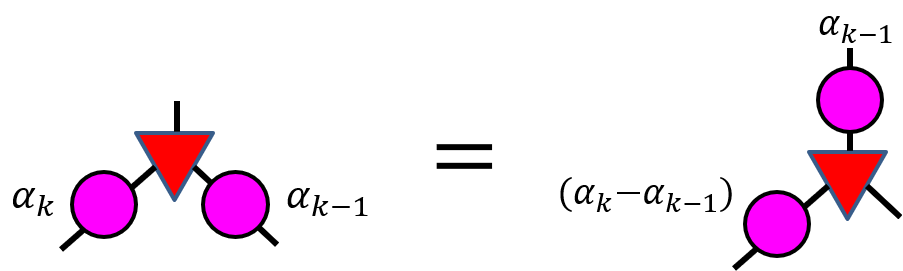}\label{eq:tripushthru}
	\end{align}
	We use relations Eq.~(\ref{eq:squpushthru}) and (\ref{eq:tripushthru}) one layer at a time until the circle tensors reach the top boundary contraction with $\ket{0}$, where they disappear. The remaining circle tensors reside on horizontal legs of the square tensors and the bottom-left legs of the triangle tensors. The former can be dealt with by reweighing square tiles as in Eq.~(\ref{eq:squredef}).
	
	The latter can be encorporated into the network by a layer-dependent reweighing of the $E_{i}$ tiles, where $l$ is as shown in Eq.~(\ref{eq:rnweight})
	\begin{align}
		\includegraphics[width=0.7\linewidth]{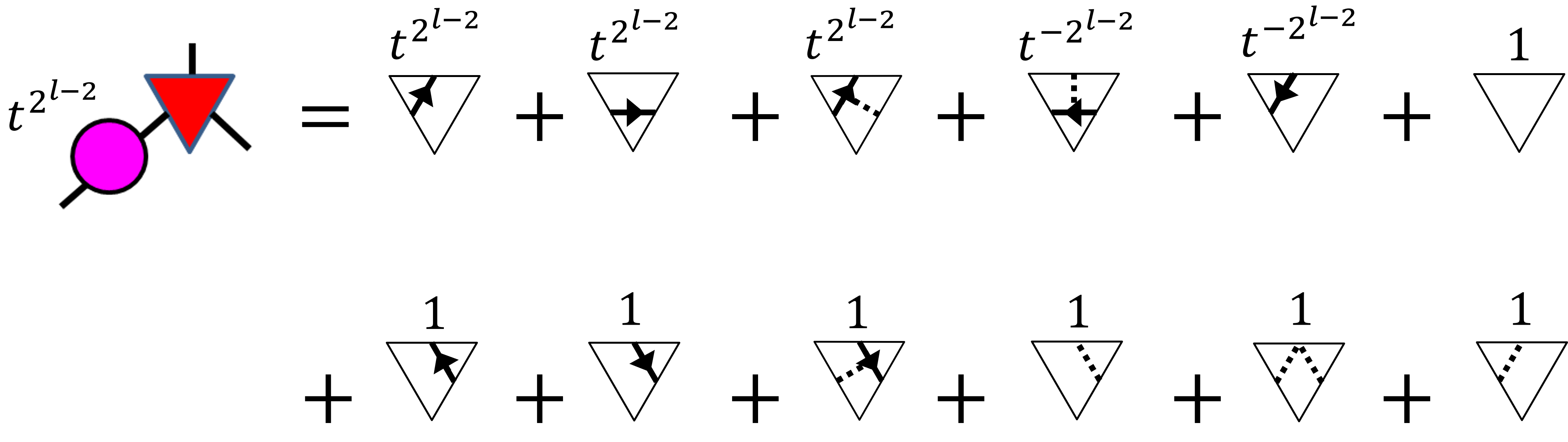}
	\end{align}
	
	Thus, the renormalization tensor network can also be modified to include non-unit values of $t$ by introducing a layer-dependent reweighing of the tiles. 
	
	
	\bibliography{Quantum_version.bbl}
\end{document}